\documentclass[final]{IEEEtran}

\usepackage{graphicx}
\usepackage{subfigure}
\usepackage{multirow}
\usepackage{array}
\newcolumntype{P}[1]{>{\centering\arraybackslash}p{#1}}
\newcolumntype{M}[1]{>{\centering\arraybackslash}m{#1}}

\usepackage{amsthm,amssymb,amsmath,graphicx,multirow,color,amsfonts}%
\usepackage[update,prepend]{epstopdf}
\usepackage{multirow}
\usepackage[latin1]{inputenc}
\usepackage{tikz}
\usepackage{bbm} % for \mathbbm{1}
\usepackage{pdfpages}
\usepackage{multirow}
\usepackage{subfig}
\usepackage{comment}

\usepackage{graphicx}
\usepackage{multicol}
\usepackage{cite}

\usepackage[justification=centering]{caption}
\usepackage{textcomp}
\usepackage{psfrag}
\usepackage{arydshln}
\usepackage{url}
\usepackage{soul}
\usepackage{graphicx,color}
\usepackage[nolist]{acronym}
\usepackage{algorithm,algorithmic} %algorithm
\usepackage{mathtools,lipsum}
\usepackage{cuted}
\usepackage{amsmath}

\usepackage{mathrsfs}

\usepackage[capitalise]{cleveref}
\Crefname{equation}{Eq.\!}{Eqs.\!}
\Crefname{figure}{Fig.\!}{Figs.\!}
\Crefname{tabular}{Tab.\!}{Tabs.\!}
\Crefname{section}{Section\!}{Sections.\!}

\def\nb0{{\mathbf{0}}}
\def\nb1{{\mathbf{1}}}

\newtheorem{lemma}{Lemma}

\newtheorem{definition}{Definition}

\newtheorem{theorem}{Theorem}

\newtheorem{corollary}{Corollary}

\newenvironment{sequation}{
\begin{equation}\small}{\end{equation}
}

\begin{document}
%\pagenumbering{gobble}
\graphicspath{{./Figures/}}
	\begin{acronym}

\acro{5G-NR}{5G New Radio}
\acro{3GPP}{3rd Generation Partnership Project}
\acro{ABS}{aerial base station}
\acro{AC}{address coding}
\acro{ACF}{autocorrelation function}
\acro{ACR}{autocorrelation receiver}
\acro{ADC}{analog-to-digital converter}
\acrodef{aic}[AIC]{Analog-to-Information Converter}     
\acro{AIC}[AIC]{Akaike information criterion}
\acro{aric}[ARIC]{asymmetric restricted isometry constant}
\acro{arip}[ARIP]{asymmetric restricted isometry property}

\acro{ARQ}{Automatic Repeat Request}
\acro{AUB}{asymptotic union bound}
\acrodef{awgn}[AWGN]{Additive White Gaussian Noise}     
\acro{AWGN}{additive white Gaussian noise}

\acro{APSK}[PSK]{asymmetric PSK} 

\acro{waric}[AWRICs]{asymmetric weak restricted isometry constants}
\acro{warip}[AWRIP]{asymmetric weak restricted isometry property}
\acro{BCH}{Bose, Chaudhuri, and Hocquenghem}        
\acro{BCHC}[BCHSC]{BCH based source coding}
\acro{BEP}{bit error probability}
\acro{BFC}{block fading channel}
\acro{BG}[BG]{Bernoulli-Gaussian}
\acro{BGG}{Bernoulli-Generalized Gaussian}
\acro{BPAM}{binary pulse amplitude modulation}
\acro{BPDN}{Basis Pursuit Denoising}
\acro{BPPM}{binary pulse position modulation}
\acro{BPSK}{Binary Phase Shift Keying}
\acro{BPZF}{bandpass zonal filter}
\acro{BSC}{binary symmetric channels}              
\acro{BU}[BU]{Bernoulli-uniform}
\acro{BER}{bit error rate}
\acro{BS}{base station}
\acro{BW}{BandWidth}
\acro{BLLL}{ binary log-linear learning }

\acro{CP}{Cyclic Prefix}
\acrodef{cdf}[CDF]{cumulative distribution function}   
\acro{CDF}{Cumulative Distribution Function}
\acrodef{c.d.f.}[CDF]{cumulative distribution function}
\acro{CCDF}{complementary cumulative distribution function}
\acrodef{ccdf}[CCDF]{complementary CDF}               
\acrodef{c.c.d.f.}[CCDF]{complementary cumulative distribution function}
\acro{CD}{cooperative diversity}

\acro{CDMA}{Code Division Multiple Access}
\acro{ch.f.}{characteristic function}
\acro{CIR}{channel impulse response}
\acro{cosamp}[CoSaMP]{compressive sampling matching pursuit}
\acro{CR}{cognitive radio}
\acro{cs}[CS]{compressed sensing}                   
\acrodef{cscapital}[CS]{Compressed sensing} %will not include it in the list
\acrodef{CS}[CS]{compressed sensing}
\acro{CSI}{channel state information}
\acro{CCSDS}{consultative committee for space data systems}
\acro{CC}{convolutional coding}
\acro{Covid19}[COVID-19]{Coronavirus disease}

\acro{DAA}{detect and avoid}
\acro{DAB}{digital audio broadcasting}
\acro{DCT}{discrete cosine transform}
\acro{dft}[DFT]{discrete Fourier transform}
\acro{DR}{distortion-rate}
\acro{DS}{direct sequence}
\acro{DS-SS}{direct-sequence spread-spectrum}
\acro{DTR}{differential transmitted-reference}
\acro{DVB-H}{digital video broadcasting\,--\,handheld}
\acro{DVB-T}{digital video broadcasting\,--\,terrestrial}
\acro{DL}{DownLink}
\acro{DSSS}{Direct Sequence Spread Spectrum}
\acro{DFT-s-OFDM}{Discrete Fourier Transform-spread-Orthogonal Frequency Division Multiplexing}
\acro{DAS}{Distributed Antenna System}
\acro{DNA}{DeoxyriboNucleic Acid}

\acro{EC}{European Commission}
\acro{EED}[EED]{exact eigenvalues distribution}
\acro{EIRP}{Equivalent Isotropically Radiated Power}
\acro{ELP}{equivalent low-pass}
\acro{eMBB}{Enhanced Mobile Broadband}
\acro{EMF}{ElectroMagnetic Field}
\acro{EU}{European union}
\acro{EI}{Exposure Index}
\acro{eICIC}{enhanced Inter-Cell Interference Coordination}

\acro{FC}[FC]{fusion center}
\acro{FCC}{Federal Communications Commission}
\acro{FEC}{forward error correction}
\acro{FFT}{fast Fourier transform}
\acro{FH}{frequency-hopping}
\acro{FH-SS}{frequency-hopping spread-spectrum}
\acrodef{FS}{Frame synchronization}
\acro{FSsmall}[FS]{frame synchronization}  
\acro{FDMA}{Frequency Division Multiple Access}

\acro{GA}{Gaussian approximation}
\acro{GF}{Galois field }
\acro{GG}{Generalized-Gaussian}
\acro{GIC}[GIC]{generalized information criterion}
\acro{GLRT}{generalized likelihood ratio test}
\acro{GPS}{Global Positioning System}
\acro{GMSK}{Gaussian Minimum Shift Keying}
\acro{GSMA}{Global System for Mobile communications Association}
\acro{GS}{ground station}
\acro{GMG}{ Grid-connected MicroGeneration}

\acro{HAP}{high altitude platform}
\acro{HetNet}{Heterogeneous network}

\acro{IDR}{information distortion-rate}
\acro{IFFT}{inverse fast Fourier transform}
\acro{iht}[IHT]{iterative hard thresholding}
\acro{i.i.d.}{independent, identically distributed}
\acro{IoT}{Internet of Things}                      
\acro{IR}{impulse radio}
\acro{lric}[LRIC]{lower restricted isometry constant}
\acro{lrict}[LRICt]{lower restricted isometry constant threshold}
\acro{ISI}{intersymbol interference}
\acro{ITU}{International Telecommunication Union}
\acro{ICNIRP}{International Commission on Non-Ionizing Radiation Protection}
\acro{IEEE}{Institute of Electrical and Electronics Engineers}
\acro{ICES}{IEEE international committee on electromagnetic safety}
\acro{IEC}{International Electrotechnical Commission}
\acro{IARC}{International Agency on Research on Cancer}
\acro{IS-95}{Interim Standard 95}

\acro{KPI}{Key Performance Indicator}

\acro{LEO}{low earth orbit}
\acro{LF}{likelihood function}
\acro{LLF}{log-likelihood function}
\acro{LLR}{log-likelihood ratio}
\acro{LLRT}{log-likelihood ratio test}
\acro{LoS}{Line-of-Sight}
\acro{LRT}{likelihood ratio test}
\acro{wlric}[LWRIC]{lower weak restricted isometry constant}
\acro{wlrict}[LWRICt]{LWRIC threshold}
\acro{LPWAN}{Low Power Wide Area Network}
\acro{LoRaWAN}{Low power long Range Wide Area Network}
\acro{NLoS}{Non-Line-of-Sight}
\acro{LiFi}[Li-Fi]{light-fidelity}
 \acro{LED}{light emitting diode}
 \acro{LABS}{LoS transmission with each ABS}
 \acro{NLABS}{NLoS transmission with each ABS}

\acro{MB}{multiband}
\acro{MC}{macro cell}
\acro{MDS}{mixed distributed source}
\acro{MF}{matched filter}
\acro{m.g.f.}{moment generating function}
\acro{MI}{mutual information}
\acro{MIMO}{Multiple-Input Multiple-Output}
\acro{MISO}{multiple-input single-output}
\acrodef{maxs}[MJSO]{maximum joint support cardinality}                       
\acro{ML}[ML]{maximum likelihood}
\acro{MMSE}{minimum mean-square error}
\acro{MMV}{multiple measurement vectors}
\acrodef{MOS}{model order selection}
\acro{M-PSK}[${M}$-PSK]{$M$-ary phase shift keying}                       
\acro{M-APSK}[${M}$-PSK]{$M$-ary asymmetric PSK} 
\acro{MP}{ multi-period}
\acro{MINLP}{mixed integer non-linear programming}

\acro{M-QAM}[$M$-QAM]{$M$-ary quadrature amplitude modulation}
\acro{MRC}{maximal ratio combiner}                  
\acro{maxs}[MSO]{maximum sparsity order}                                      
\acro{M2M}{Machine-to-Machine}                                                
\acro{MUI}{multi-user interference}
\acro{mMTC}{massive Machine Type Communications}      
\acro{mm-Wave}{millimeter-wave}
\acro{MP}{mobile phone}
\acro{MPE}{maximum permissible exposure}
\acro{MAC}{media access control}
\acro{NB}{narrowband}
\acro{NBI}{narrowband interference}
\acro{NLA}{nonlinear sparse approximation}
\acro{NLOS}{Non-Line of Sight}
\acro{NTIA}{National Telecommunications and Information Administration}
\acro{NTP}{National Toxicology Program}
\acro{NHS}{National Health Service}

\acro{LOS}{Line of Sight}

\acro{OC}{optimum combining}                             
\acro{OC}{optimum combining}
\acro{ODE}{operational distortion-energy}
\acro{ODR}{operational distortion-rate}
\acro{OFDM}{Orthogonal Frequency-Division Multiplexing}
\acro{omp}[OMP]{orthogonal matching pursuit}
\acro{OSMP}[OSMP]{orthogonal subspace matching pursuit}
\acro{OQAM}{offset quadrature amplitude modulation}
\acro{OQPSK}{offset QPSK}
\acro{OFDMA}{Orthogonal Frequency-division Multiple Access}
\acro{OPEX}{Operating Expenditures}
\acro{OQPSK/PM}{OQPSK with phase modulation}

\acro{PAM}{pulse amplitude modulation}
\acro{PAR}{peak-to-average ratio}
\acrodef{pdf}[PDF]{probability density function}                      
\acro{PDF}{probability density function}
\acrodef{p.d.f.}[PDF]{probability distribution function}
\acro{PDP}{power dispersion profile}
\acro{PMF}{probability mass function}                             
\acrodef{p.m.f.}[PMF]{probability mass function}
\acro{PN}{pseudo-noise}
\acro{PPM}{pulse position modulation}
\acro{PRake}{Partial Rake}
\acro{PSD}{power spectral density}
\acro{PSEP}{pairwise synchronization error probability}
\acro{PSK}{phase shift keying}
\acro{PD}{power density}
\acro{8-PSK}[$8$-PSK]{$8$-phase shift keying}
\acro{PPP}{Poisson point process}
\acro{PCP}{Poisson cluster process}
 
\acro{FSK}{Frequency Shift Keying}

\acro{QAM}{Quadrature Amplitude Modulation}
\acro{QPSK}{Quadrature Phase Shift Keying}
\acro{OQPSK/PM}{OQPSK with phase modulator }

\acro{RD}[RD]{raw data}
\acro{RDL}{"random data limit"}
\acro{ric}[RIC]{restricted isometry constant}
\acro{rict}[RICt]{restricted isometry constant threshold}
\acro{rip}[RIP]{restricted isometry property}
\acro{ROC}{receiver operating characteristic}
\acro{rq}[RQ]{Raleigh quotient}
\acro{RS}[RS]{Reed-Solomon}
\acro{RSC}[RSSC]{RS based source coding}
\acro{r.v.}{random variable}                               
\acro{R.V.}{random vector}
\acro{RMS}{root mean square}
\acro{RFR}{radiofrequency radiation}
\acro{RIS}{Reconfigurable Intelligent Surface}
\acro{RNA}{RiboNucleic Acid}
\acro{RRM}{Radio Resource Management}
\acro{RUE}{reference user equipments}
\acro{RAT}{radio access technology}
\acro{RB}{resource block}

\acro{SA}[SA-Music]{subspace-augmented MUSIC with OSMP}
\acro{SC}{small cell}
\acro{SCBSES}[SCBSES]{Source Compression Based Syndrome Encoding Scheme}
\acro{SCM}{sample covariance matrix}
\acro{SEP}{symbol error probability}
\acro{SG}[SG]{sparse-land Gaussian model}
\acro{SIMO}{single-input multiple-output}
\acro{SINR}{signal-to-interference plus noise ratio}
\acro{SIR}{signal-to-interference ratio}
\acro{SISO}{Single-Input Single-Output}
\acro{SMV}{single measurement vector}
\acro{SNR}[\textrm{SNR}]{signal-to-noise ratio} 
\acro{sp}[SP]{subspace pursuit}
\acro{SS}{spread spectrum}
\acro{SW}{sync word}
\acro{SAR}{specific absorption rate}
\acro{SSB}{synchronization signal block}
\acro{SR}{shrink and realign}

\acro{tUAV}{tethered Unmanned Aerial Vehicle}
\acro{TBS}{terrestrial base station}

\acro{uUAV}{untethered Unmanned Aerial Vehicle}
\acro{PDF}{probability density functions}

\acro{PL}{path-loss}

\acro{TH}{time-hopping}
\acro{ToA}{time-of-arrival}
\acro{TR}{transmitted-reference}
\acro{TW}{Tracy-Widom}
\acro{TWDT}{TW Distribution Tail}
\acro{TCM}{trellis coded modulation}
\acro{TDD}{Time-Division Duplexing}
\acro{TDMA}{Time Division Multiple Access}
\acro{Tx}{average transmit}

\acro{UAV}{Unmanned Aerial Vehicle}
\acro{uric}[URIC]{upper restricted isometry constant}
\acro{urict}[URICt]{upper restricted isometry constant threshold}
\acro{UWB}{ultrawide band}
\acro{UWBcap}[UWB]{Ultrawide band}   
\acro{URLLC}{Ultra Reliable Low Latency Communications}
         
\acro{wuric}[UWRIC]{upper weak restricted isometry constant}
\acro{wurict}[UWRICt]{UWRIC threshold}                
\acro{UE}{User Equipment}
\acro{UL}{UpLink}

\acro{WiM}[WiM]{weigh-in-motion}
\acro{WLAN}{wireless local area network}
\acro{wm}[WM]{Wishart matrix}                               
\acroplural{wm}[WM]{Wishart matrices}
\acro{WMAN}{wireless metropolitan area network}
\acro{WPAN}{wireless personal area network}
\acro{wric}[WRIC]{weak restricted isometry constant}
\acro{wrict}[WRICt]{weak restricted isometry constant thresholds}
\acro{wrip}[WRIP]{weak restricted isometry property}
\acro{WSN}{wireless sensor network}                        
\acro{WSS}{Wide-Sense Stationary}
\acro{WHO}{World Health Organization}
\acro{Wi-Fi}{Wireless Fidelity}

\acro{sss}[SpaSoSEnc]{sparse source syndrome encoding}

\acro{VLC}{Visible Light Communication}
\acro{VPN}{Virtual Private Network} 
\acro{RF}{Radio Frequency}
\acro{FSO}{Free Space Optics}
\acro{IoST}{Internet of Space Things}

\acro{GSM}{Global System for Mobile Communications}
\acro{2G}{Second-generation cellular network}
\acro{3G}{Third-generation cellular network}
\acro{4G}{Fourth-generation cellular network}
\acro{5G}{Fifth-generation cellular network}	
\acro{gNB}{next-generation Node-B Base Station}
\acro{NR}{New Radio}
\acro{UMTS}{Universal Mobile Telecommunications Service}
\acro{LTE}{Long Term Evolution}

\acro{QoS}{Quality of Service}
\end{acronym}
	
	%% EMF definitions
\newcommand{\SAR} {\mathrm{SAR}}
\newcommand{\WBSAR} {\mathrm{SAR}_{\mathsf{WB}}}
\newcommand{\gSAR} {\mathrm{SAR}_{10\si{\gram}}}
\newcommand{\Sab} {S_{\mathsf{ab}}}
\newcommand{\Eavg} {E_{\mathsf{avg}}}
\newcommand{\ft}{f_{\textsf{th}}}
\newcommand{\alphatf}{\alpha_{24}}

\title{
Coverage and Rate Analysis of Follower-Based LEO Satellite Networks: A Stochastic \\ Geometry Approach
}

\author{
Juanjuan Ru, Ruibo Wang, {\em Member, IEEE}, and Mohamed-Slim Alouini, {\em Fellow, IEEE}
\thanks{This work was supported by the King Abdullah University of Science and Technology Office of Sponsored Research.
\par
The authors are with King Abdullah University of Science and Technology, CEMSE division, Thuwal 23955-6900, Saudi Arabia. Corresponding author: Ruibo Wang. (e-mail: Juanjuan.ru@kaust.edu.sa; ruibo.wang@kaust.edu.sa; slim.alouini@kaust.edu.sa). 
}
\vspace{-6mm}
}
\maketitle

\begin{abstract}
To mitigate inter-satellite interference and payload limits in LEO mega-constellations, satellite clusters, groups of small cooperative satellites have been proposed to improve performance and reduce interference. The typical configuration divides the cluster into a leader satellite with full processing and control capabilities and multiple simpler follower satellites that assist with coverage and throughput. These clusters enhance coverage and throughput, prompting interest in their performance gains and optimal deployment. Given that the spherical stochastic geometry (SG) model has been proven effective for modeling such structures, we establish a performance evaluation framework based on the SG approach for the leader-follower satellite architecture, enabling an assessment of communication performance under different deployment configurations quantitatively. We derive analytical expressions for the outage probability and average data rate to evaluate the communication performance of the satellite system, along with low-complexity approximations. Numerical results demonstrate the performance advantages of the leader-follower architecture over a single leader satellite and explore optimal deployment configurations for the follower satellites. 
\end{abstract}

\begin{IEEEkeywords}
Satellite follower, stochastic geometry, performance analysis, binomial point process, LEO satellite network.
\end{IEEEkeywords}

\section{Introduction}
\subsection{Motivation}
In recent years, the rapid expansion of low Earth orbit (LEO) satellite constellations has enabled low-latency and seamless global coverage \cite{xiao2022leo}. Compared to terrestrial systems, satellite networks offer enhanced resilience and are particularly well suited for ultra-long-distance transmissions \cite{wang2022stochastic}. However, the construction of mega-constellations has also raised many issues. Firstly, for a constellation that is already large in scale, adding more satellites to the orbit may increase interference among the satellites \cite{dwivedi2023performance}. Secondly, the scalability of a single satellite's payload and service capabilities is limited by its inherent hardware. Inspired by this, the concept of satellite clusters has been proposed \cite{campbell2003planning}. A group of closely positioned small satellites can further reduce interference and enhance communication performance through cooperation \cite{liu2018survey}.

\par
Specifically, satellite clusters can be categorized into two types based on hardware configuration. The first type divides the cluster into "leader" and "follower" roles \cite{goh2019leader,ahn2012leader}. The leader satellite is responsible for managing the entire cluster and is equipped with a powerful onboard processor to support inter-satellite communication and routing. The follower satellites are designed to execute the leader's instructions and enhance the cluster's performance. Therefore, their functionality can be simplified as much as possible. For example, they may only be used for coverage extension and communication throughput enhancement, thus eliminating the need for a comprehensive protocol stack \cite{wang2025modeling}.

\par
The second type of satellite cluster is temporarily formed by a group of geographically proximate satellites within the constellation \cite{lee2024analyzing}. This type of cluster does not incur the additional cost of launching follower satellites, nor does it risk complete cluster failure due to a leader malfunction. However, since followers and leaders typically have similar hardware configurations, this structure may lead to wasted resources for followers. Moreover, as no additional follower satellites are introduced, the performance improvement of such clusters is limited. Therefore, we primarily focus on the first type of satellite cluster, that is, follower-based satellite clusters.

\subsection{Related Works}
To date, several studies have explored the feasibility, practical advantages, mathematical modeling approaches, and performance evaluation of the leader-follower architecture in satellite clusters. The Canadian nano-satellite program successfully demonstrated satellite clustering through minor orbital configuration adjustments \cite{eyer2007formation}. When observed from Earth, the follower satellite orbited the leader satellite in a circular path. The authors in \cite{popov2021development} further noted that by expanding to multiple satellites, formations consisting of more than two follower-leader satellites can be achieved, enabling coordinated group movement. Since the followers' orbits are determined by the leader's orbit, adding more followers does not require additional orbital resources \cite{yu2016virtual}. Moreover, the relative distance between satellites within the cluster ranges from just a few hundred meters to several hundred kilometers. Due to this geographic proximity, beam alignment, considered one of the primary challenges in inter-satellite links \cite{wang2024ultra}, can be achieved with relatively high accuracy.

\par
The authors in \cite{jung2023satellite} proposed two mathematical modeling approaches for the satellite cluster architecture. The first is termed the circular cluster, where follower satellites are equidistant from the leader satellite. The second is called the uniform cluster, where the leader resides at the center of a spherical cap while followers are uniformly distributed across its surface. Schematic diagrams of two types of follower-based satellite clusters are shown in Fig.~\ref{figure1}. Since uniform distribution maximizes spatial efficiency, under identical minimum distance constraints, a uniform formation can accommodate more followers than a circular formation. Therefore, this paper focuses on uniform clusters. The authors in \cite{jung2023satellite2} further demonstrated that the homogeneous binomial point process (BPP) can accurately model the spatial distribution of uniform clusters, validating BPP's applicability for modeling leader-follower satellite systems. Furthermore, studies have shown that the leader-follower architecture can enhance the performance of satellite communication networks through spatial diversity. Since the probability of establishing line-of-sight (LoS) links varies with elevation angle \cite{al2020modeling}, the low orbital altitude of LEO satellites results in a non-negligible probability of link blockage between satellites and ground terminals \cite{chae2023performance}. With the help of the spatial diversity of satellite followers, the probability that all satellite-to-terminal links within a cluster are non-line-of-sight simultaneously becomes much lower.
\begin{figure}[ht]
\centering
\includegraphics[width=\linewidth]{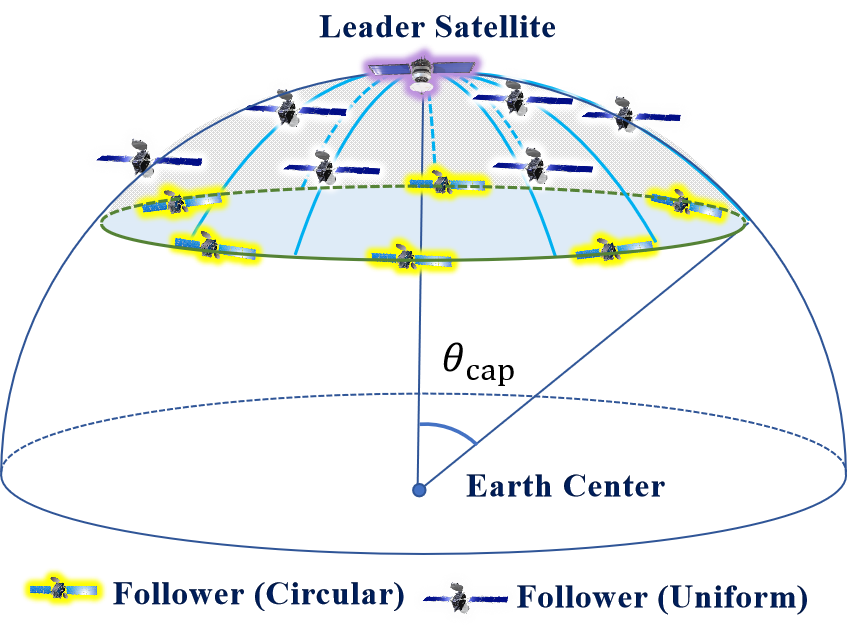}
\caption{Schematic diagram of two types of satellite clusters.}
\label{figure1}
\end{figure}

\par
So far, the study most relevant to this paper is \cite{jung2023satellite2}, which laid the foundation for the rationality and practical significance of the leader-follower structure. However, as a non-technical review, it did not establish a theoretical framework for performance analysis. In fact, a quantitative analytical framework is necessary to evaluate the improvement in communication performance brought by this structure. Through performance evaluation, we can quantify the enhancement of communication performance achieved by introducing followers and determine the optimal configuration of followers. Currently, the communication performance improvement of the leader-follower structure is limited to its spatial diversity. We aim to explore more possibilities to demonstrate the advantages of this structure. Since inter-satellite links operate in free space with short communication distances, their capacity is greater than that of satellite-to-ground links \cite{wang2025satellite}. Therefore, the leader can simultaneously transmit signals to multiple followers, using them as relays to forward the leader's communication data to the ground in parallel. Many LEO constellations, such as Starlink, already have the capability to establish multiple ISLs simultaneously \cite{chen20243}. In our view, the aggregate data rate of multiple follower satellites has the potential to exceed that of a single leader satellite.

\subsection{Contribution}
To our knowledge, this paper presents the first technical study on satellite followers based on stochastic geometry. The specific contributions are as follows:
\begin{itemize}
\item We establish a performance analysis framework for leader-follower satellite systems using the BPP model and the stochastic geometry tool, analyzing network outage probability and average data rate performance. Furthermore, we derive upper and lower bounds to provide low-complexity performance approximations.
\item We verify the accuracy of our analytical expressions through Monte Carlo simulations. As an application of our analytical framework, we study the impact of factors such as altitude, transmit power, and the number of follower satellites in the cluster on network performance.
\item We conduct case studies and quantitatively evaluate the coverage performance improvement of a leader-follower cluster compared to a single leader satellite. Additionally, we examine the average data rate enhancement achieved by employing multiple followers as relays versus a single leader. Based on these analyses, we validate both the spatial diversity advantages of satellite clusters and the effectiveness of follower structures in extending individual satellites' service capabilities. 
\end{itemize}

\section{System Model}
This section introduces the spatial distribution model of the follower-based LEO satellite network and the channel model of satellite-user links and inter-satellite links. Table~\ref{table1} provides the main symbols and their definition in this paper.

\begin{table*}[ht]
\centering
\caption{Simulation parameters \cite{wang2022conditional,jung2023satellite2,talgat2024maximizing}.}
\label{table1}
\renewcommand{\arraystretch}{1.1}
\resizebox{0.75\linewidth}{!}{
\begin{tabular}{|c|c|c|}
\hline
Notation   &  Parameter  & Default Value    \\ \hline \hline
$h_{\mathrm{Sat}}$   & Altitude of satellite  & $600$~km    \\ \hline
$R_{\oplus}$   & Earth radius  & $6371$~km    \\ \hline
$N_L$   & Number of leader satellites   & $1000$  \\ \hline
$N_F$  & Number of followers per leader satellite& $10$  \\ \hline
$\theta_{\mathrm{cap}}$   & Central angle of follower distribution cap & $1^\circ$  \\ \hline
$\theta_{\mathrm{max}}$   & Maximum angle between leader and user  & $\pi/4$  \\ \hline
$\gamma_{\mathrm{th}}$   & Coverage threshold  & $-5$~dB  \\ \hline
$\rho_{\mathrm{LU}}$, $\rho_{\mathrm{FU}}$, $\rho_{\mathrm{LF}}$   & Transmit power & $20$~dBW, $15$~dBW, $5$~dBW    \\ \hline
$G_{\mathrm{LU}}$, $G_{\mathrm{FU}}$, $G_{\mathrm{LF}}$   & Transmitting antenna gain & $30$~dBi, $30$~dBi, $30$~dBi  \\ \hline
$\nu_{\mathrm{LU}}$, $\nu_{\mathrm{FU}}$, $\nu_{\mathrm{LF}}$   & Wavelength & $0.015$~m, $0.015$~m, $0.015$~m  \\ \hline
$\zeta_U$, $\zeta_F$   & Rain attenuation   & $-2$~dB, $0$~dB     \\ \hline
$\sigma_{\mathrm{U}}^2$, $\sigma_{\mathrm{F}}^2$     & Noise power   & $-94$~dBm, $-84$~dBm  \\  \hline
% $B_{\mathrm{LU}}$, $B_{\mathrm{FU}}$, $B_{\mathrm{LF}}$     & Bandwidth   &$100$~MHz, $100$~MHz, $1$~GHz,  \\  \hline
$ (\Omega,b_0,m)$ & Parameters of SR fading   & $ (1.29,0.158,19.4)$ \\ \hline
\end{tabular}
}
\end{table*}

\subsection{Spatial Distribution Model}
We consider a LEO satellite constellation with $N_L$ satellites deployed on a spherical surface with radius $R_{\mathrm{sat}} = R_{\oplus} + h_{\mathrm{sat}}$, where $R_{\oplus}$ is the Earth's radius and $h_{\mathrm{sat}}$ is the altitude of satellites. The positions of the leader satellites follow a homogeneous BPP on the sphere. Note that we study the downlink communication performance of the satellite network from the perspective of a typical user, while satellites simultaneously serve multiple users. Under the homogeneous BPP modeling assumption, the performance of the typical user is representative of the average performance across a large population of users.

\par
The nearest leader satellite is associated with the ground user to provide communication services. We also refer to the nearest leader satellite to the user as the associated leader satellite. In addition, we consider that some leader satellites are equipped with follower satellites. Assume that each leader satellite is surrounded by $N_{\mathrm{F}}$ follower satellites. These follower satellites are deployed in a spherical cap centered on the leader satellite. The positions of the $N_{\mathrm{F}}$ follower satellites assigned to each leader satellite independently follow a homogeneous BPP within the spherical cap. It is worth noting that we consider a general scenario where not all leader satellites are equipped with followers. This allows for a comparison of the communication performance between a satellite cluster with followers and an individual leader satellite. 

\par
Furthermore, we denote the maximum communication distance between a leader satellite and a user as $d_{\mathrm{max}}$. Given that satellites are distributed on a sphere or spherical cap, using the central angle is more convenient for expression.
\begin{definition}[Central Angle] \label{definition1}
    The central angle is defined as the angle formed by the lines connecting two devices to the center of the Earth \cite{wang2022ultra}.
\end{definition}
Thus, the maximum central angle between a leader satellite and a user can be expressed as
\begin{equation}
    \theta_{\max} = \arccos \left( \frac{R_{\oplus}^2 + R_{\mathrm{sat}}^2 - \left(d_{\mathrm{max}} \right)^2 }{2R_{\oplus} R_{\mathrm{sat}}} \right).
\end{equation}
To ensure that all followers around the leader have a line-of-sight connection to the user, the following inequality needs to be satisfied:
\begin{equation}
    \theta_{\mathrm{max}} \leq \arccos\left( \frac{R_{\oplus}}{R_{\mathrm{sat}}} \right) - \theta_{\mathrm{cap}},
\end{equation}
where $\theta_{\mathrm{cap}}$ is the central angle of the spherical cap where the follower satellites are located.

\subsection{Channel Model}
This subsection models the space-ground links from the leader satellite to the user and from the follower satellite to the user, as well as the inter-satellite link from the leader to the follower. Let the received signal-to-noise ratio (SNR) of these three types of links be denoted as $\mathrm{SNR}_{\mathrm{LU}}$, $\mathrm{SNR}_{\mathrm{FU}}$, and $\mathrm{SNR}_{\mathrm{LF}}$, respectively. We consider that the channel links are experienced with large-scale path loss, rain attenuation, and small-scale fading. Therefore, the received SNR can be expressed as \cite{talgat2020stochastic}
\begin{equation} \label{SNR}
\mathrm{SNR}_{Q_1Q_2} = \rho_{Q_1Q_2} G_{Q_1Q_2} \zeta_{Q_2} W_{Q_2} \left( \frac{\nu_{Q_1Q_2}}{4\pi r_{Q_1Q_2} \sigma_{Q_2}} \right)^2,
\end{equation}
where $Q_1Q_2 \in \{ {\mathrm{LU}}, {\mathrm{FU}}, {\mathrm{LF}} \}$. Note that we distinguish between the types of transmitters and receivers using subscripts $Q_1$ and $Q_2$. For example, when the transmitter is the leader satellite and the receiver is the user, $Q_1Q_2$ is replaced by ${\mathrm{LU}}$. In (\ref{SNR}), $\rho_{Q_1Q_2}$, $G_{Q_1Q_2}$, $\nu_{Q_1Q_2}$, $r_{Q_1Q_2}$ denote the transmit power, antenna gain, wavelength, and distance between the transmitter and receiver, respectively. $\sigma_{Q_2}^2$ represents the thermal noise at the receiver. $\zeta_U$ denotes the rain attenuation factor for space-to-ground links, and we assume $\zeta_F = 1$ due to the minimal impact of rain attenuation on inter-satellite links.

To simplify the expression of the received SNR, we define the composite parameter
\begin{equation}
\xi_{Q_1Q_2} = \rho_{Q_1Q_2} G_{Q_1Q_2} \zeta_{Q_2}
\left( \frac{\nu_{Q_1Q_2}}{4\pi \sigma_{Q_2}} \right)^2,
\end{equation} 
By substituting this definition into \eqref{SNR}, the received SNR can be rewritten as
\begin{equation}
\mathrm{SNR}_{Q_1Q_2} = \xi_{Q_1Q_2} \, 
 \frac{W_{Q_2}}{r_{Q_1Q_2}^2}.
\end{equation}

\par
Next, the power of small-scale fading of the space-to-ground link is denoted as $W_{\mathrm{U}}$. The small-scale fading is modeled by the shadowed Rician (SR) distribution, which is considered one of the most accurate models for aerial/space-to-ground channel links \cite{huang2021uplink}. The cumulative distribution function (CDF) of $W_{\mathrm{U}}$ is given by \cite{jung2022performance}:
\begin{equation}\label{F_WU}
\begin{split}
    F_{W_{\mathrm{U}}} (w) & = \left( \frac{2b_0 m}{2b_0 m + \Omega} \right)^m \sum_{z=0}^{\infty} \frac{(m)_z}{z! \, \Gamma(z+1)} \\
    & \times \left( \frac{\Omega}{2b_0 m + \Omega} \right)^z \Gamma_l \left( z+1 , \frac{w}{2b_0} \right),
\end{split}
\end{equation}
where $\Omega$, $2b_0$, and $m$ denote the average power of the LoS component, the average power of the multi-path component excluding the LoS component, and the Nakagami parameter, respectively. $\Gamma(z+1)$ represents the Gamma function, and $(m)_z$ is the Pochhammer symbol. $\Gamma_l \left( \cdot , \cdot \right)$ is the lower incomplete Gamma function:
\begin{equation}
    \Gamma_l \left( z+1 , \frac{w}{2b_0} \right) = \int_0^{\frac{w}{2b_0}} t^{z} \, e^{-t} \mathrm{d} t.
\end{equation}

\par
Since processing the probability density function (PDF) of $W_{\mathrm{U}}$ is challenging, authors in \cite{jia2021uplink} approximated it as
\begin{equation}
    f_{W_{\mathrm{U}}} (w) \approx \frac{1}{m_2^{m_1} \Gamma \left( m_1 \right)} w^{m_1 - 1} e^{-\frac{w}{m_2} },
\end{equation}
where $\Gamma (\cdot)$ denotes the Gamma function. $m_1$ represents the shape parameter
\begin{equation}
    m_1 = \frac{m \left( 2b_0 + \Omega \right)^2}{4 m b_0^2 + 4 m b_0 \Omega + \Omega^2},
\end{equation}
and $m_2$ represents the scale parameter
\begin{equation}
    m_2 = \frac{4 m b_0^2 + 4 m b_0 \Omega + \Omega^2}{m \left( 2b_0 + \Omega \right)}.
\end{equation}

\par
Considering that the shadowing effect, multipath effect, and pointing errors in the leader-follower link are not significant, the small-scale fading of the inter-satellite link is neglected. As stated, frequency coordination within the satellite cluster is not difficult. Additionally, the wide bandwidth of space-ground communication allows for effective frequency allocation. Therefore, the impact of interference on performance will be discussed in future work.

\section{Performance Analysis}
The performance of the proposed LEO satellite system is characterized in terms of outage probability and average data rate. To facilitate the analysis, the analytical expressions of contact angle distributions are first derived.

\subsection{Contact Angle Distribution}
This section analyzes the distribution of communication distances from satellites that provide services to users. Considering the spherical modeling, it is more convenient to use the central angle rather than the Euclidean distance. Based on the definition of the central angle in Definition~\ref{definition1}, we provide the following definition.

\begin{definition}[Contact Angle] \label{definition2}
The central angle between the user and the associated leader satellite is referred to as the contact angle \cite{al2021analytic}. 
\par
When the associated leader satellite is equipped with follower satellites, the central angle between any follower satellite within the cluster and the user is also referred to as the contact angle.
\end{definition}

We first derive the distribution of contact angle from the simple case without follower satellites. 

\begin{lemma} \label{lemma1}
Denote the contact angle between the associated leader satellite and the user as $\theta_{\mathrm{LU}}$. The PDF of the distribution of $\theta_{\mathrm{LU}}$ is given by:
\begin{equation}
f_{\theta_{\mathrm{LU}}}(\theta) = \frac{N_{\mathrm{L}} \sin \theta}{2} \left( \frac{1 + \cos \theta}{2} \right)^{N_{\mathrm{L}} - 1}, \ 0 \leq \theta_{\mathrm{LU}} \leq \theta_{\max}.
\end{equation}
\end{lemma}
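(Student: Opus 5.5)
I would derive the PDF by first computing the complementary CDF of $\theta_{\mathrm{LU}}$ and then differentiating. The key observation is that, by rotational symmetry of the sphere, the position of a single leader satellite drawn from the homogeneous BPP is uniform on the sphere of radius $R_{\mathrm{sat}}$. Its central angle to the typical user, measured about the Earth's center, therefore has a law determined entirely by the area of a spherical cap.

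First I would fix the user and consider the spherical cap on the satellite sphere of all points whose central angle to the user is at most $\theta$. The area of this cap is $2\pi R_{\mathrm{sat}}^2(1-\cos\theta)$. Dividing by the total area $4\pi R_{\mathrm{sat}}^2$ gives the probability that one uniformly placed satellite lies inside it:
\begin{equation}
p(\theta)=\frac{1-\cos\theta}{2}.
\end{equation}

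Next, the $N_{\mathrm{L}}$ leader positions are i.i.d. in the homogeneous BPP. The associated leader is the nearest one, and Euclidean distance on the satellite sphere is a monotone increasing function of the central angle, so nearest in distance means smallest central angle. Hence the event $\{\theta_{\mathrm{LU}}>\theta\}$ is exactly the event that no leader falls in the cap, which gives
\begin{equation}
\mathbb{P}(\theta_{\mathrm{LU}}>\theta)=\left(1-\frac{1-\cos\theta}{2}\right)^{N_{\mathrm{L}}}=\left(\frac{1+\cos\theta}{2}\right)^{N_{\mathrm{L}}}.
\end{equation}
Differentiating $F_{\theta_{\mathrm{LU}}}(\theta)=1-\big(\tfrac{1+\cos\theta}{2}\big)^{N_{\mathrm{L}}}$ with respect to $\theta$ by the chain rule yields $N_{\mathrm{L}}\big(\tfrac{1+\cos\theta}{2}\big)^{N_{\mathrm{L}}-1}\tfrac{\sin\theta}{2}$. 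This is the claimed expression.

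The step I expect to need the most care is the support restriction $0\le\theta\le\theta_{\max}$. The derived density is a valid PDF on $[0,\pi]$. Restricted to $[0,\theta_{\max}]$, it integrates to $1-\big(\tfrac{1+\cos\theta_{\max}}{2}\big)^{N_{\mathrm{L}}}$ rather than $1$, and the missing mass is the probability that no leader is within communication range. I would handle this by stating that the expression is the density on the event that a leader is visible (the non-outage event), where the residual mass corresponds to the user having no serving leader. This mass is negligible for the parameters in Table~\ref{table1}, since $N_{\mathrm{L}}=1000$ and $\theta_{\max}=\pi/4$. I would keep this unnormalized form, because the later outage computations can then absorb that event directly.
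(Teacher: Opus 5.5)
Your proposal is correct and follows essentially the same route as the paper's proof in Appendix~\ref{app:lemma1}. Both compute the void probability of the cap of area $2\pi R_{\mathrm{sat}}^2(1-\cos\theta)$ under the BPP, obtain $F_{\theta_{\mathrm{LU}}}(\theta)=1-\bigl(\tfrac{1+\cos\theta}{2}\bigr)^{N_{\mathrm{L}}}$, and differentiate. Your remark that the density on $[0,\theta_{\max}]$ has total mass $1-\bigl(\tfrac{1+\cos\theta_{\max}}{2}\bigr)^{N_{\mathrm{L}}}$ rather than one, with the deficit being the no-visible-leader event, is a point the paper glosses over by simply asserting the range.
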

\begin{IEEEproof}
    See Appendix~\ref{app:lemma1}.
\end{IEEEproof}

Next, assuming that the associated leader satellite is equipped with followers, the contact angle between any follower and the user is denoted as $\theta_{\mathrm{FU}}$. Unlike Lemma~\ref{lemma1}, the derivation of the distribution of $\theta_{\mathrm{FU}}$ requires a case-by-case analysis depending on whether the user is located beneath the spherical cap where the followers are distributed. In other words, the distribution of $\theta_{\mathrm{FU}}$  depends on the value of $\theta_{\mathrm{LU}}$.

\begin{lemma}\label{lemma2} 
\textbf{Case 1}: When $\theta_{\mathrm{LU}} \in [\theta_{\mathrm{cap}}, \theta_{\max}]$, the user is located outside the spherical cap. In this case, the PDF of the distribution of $\theta_{\mathrm{FU}}$ is given by
\begin{equation}
\begin{split}
& f_{\theta_{\mathrm{FU}}}(\theta) = \frac{ \cos \theta_{\mathrm{cap}} \sec^2(\theta_{\mathrm{LU}} - \theta) }{\pi (1 - \cos\theta_{\mathrm{cap}})} \\
&\times \arcsin \left( \sqrt{ \sin^2 \theta_{\mathrm{cap}} - \cos^2 \theta_{\mathrm{cap}} \tan^2(\theta_{\mathrm{LU}} - \theta) } \right).
\end{split}
\end{equation}
where $\theta \in [\theta_{\mathrm{cap}} - \theta_{\mathrm{LU}}, \theta_{\mathrm{cap}} + \theta_{\mathrm{LU}}]$.

\par
\textbf{Case 2}: When $\theta_{\mathrm{LU}} \in [0, \theta_{\mathrm{cap}})$, the user is located inside the spherical cap. In this case:
\begin{itemize}
\item For $\theta \in [\theta_{\mathrm{cap}} - \theta_{\mathrm{LU}}, \theta_{\mathrm{cap}} + \theta_{\mathrm{LU}}]$, the PDF of  is the same as that in Case 1.
\item For $\theta \in [0, \theta_{\mathrm{cap}} - \theta_{\mathrm{LU}}]$, the PDF of the distribution of $\theta_{\mathrm{FU}}$ is given by
\end{itemize}
\begin{equation}
\begin{split}
&f_{\theta_{\mathrm{FU}}}(\theta) = \frac{ \cos \theta_{\mathrm{cap}} }{\pi (1 - \cos\theta_{\mathrm{cap}})} \\
&\times \Bigg[ \frac{ \arcsin \left( \sqrt{ \sin^2 \theta_{\mathrm{cap}} - \cos^2 \theta_{\mathrm{cap}} \tan^2(\theta_{\mathrm{LU}} - \theta) } \right)}{\cos^2(\theta_{\mathrm{LU}} - \theta)} \\
&- \frac{ \arcsin \left( \sqrt{ \sin^2 \theta_{\mathrm{cap}} - \cos^2 \theta_{\mathrm{cap}} \tan^2(\theta_{\mathrm{LU}} + \theta) } \right)}{\cos^2(\theta_{\mathrm{LU}} + \theta)} \Bigg].
\end{split}
\end{equation}
\end{lemma}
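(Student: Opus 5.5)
The plan is to condition on $\theta_{\mathrm{LU}}$ and treat the position of a single follower as a uniform point on the spherical cap $\mathcal{C}$ of central angle $\theta_{\mathrm{cap}}$ around the leader. Uniformity holds because the $N_{\mathrm{F}}$ followers form a homogeneous BPP on the cap, so each follower is marginally uniform. Then
\[
F_{\theta_{\mathrm{FU}}}(\theta)=\frac{|\mathcal{C}\cap\mathcal{D}(\theta)|}{2\pi R_{\mathrm{sat}}^2(1-\cos\theta_{\mathrm{cap}})},
\]
where $\mathcal{D}(\theta)$ is the cap of central angle $\theta$ on the satellite sphere centred at the sub-user point. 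The PDF is $\mathrm{d}F/\mathrm{d}\theta$, which by the coarea formula on the sphere equals $R_{\mathrm{sat}}^2\sin\theta$ times the angular length of the circle $\partial\mathcal{D}(\theta)$ lying inside $\mathcal{C}$, divided by the cap area.

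To compute that arc length, I will place the leader at the pole and the user at colatitude $\theta_{\mathrm{LU}}$. A point at angle $\theta$ from the user and azimuth $\psi$ around the user lies at angle $\phi$ from the leader, with
\[
\cos\phi=\cos\theta\cos\theta_{\mathrm{LU}}+\sin\theta\sin\theta_{\mathrm{LU}}\cos\psi
\]
by the spherical law of cosines. The condition $\phi\le\theta_{\mathrm{cap}}$ gives $|\psi|\le\alpha(\theta)$, where $\cos\alpha=(\cos\theta_{\mathrm{cap}}-\cos\theta\cos\theta_{\mathrm{LU}})/(\sin\theta\sin\theta_{\mathrm{LU}})$. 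This yields
\[
f_{\theta_{\mathrm{FU}}}(\theta)=\frac{\alpha(\theta)\sin\theta}{\pi(1-\cos\theta_{\mathrm{cap}})}.
\]
The admissible range is $|\theta_{\mathrm{LU}}-\theta_{\mathrm{cap}}|\le\theta\le\theta_{\mathrm{cap}}+\theta_{\mathrm{LU}}$; outside it the circle is either disjoint from $\mathcal{C}$ or entirely inside it. Case 1, $\theta_{\mathrm{LU}}\ge\theta_{\mathrm{cap}}$, is exactly this partial-intersection regime, because then the user lies outside the cap.

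The main obstacle is rewriting $\alpha(\theta)\sin\theta$ in the paper's form, with $\cos\theta_{\mathrm{cap}}\sec^2(\theta_{\mathrm{LU}}-\theta)\arcsin\sqrt{\sin^2\theta_{\mathrm{cap}}-\cos^2\theta_{\mathrm{cap}}\tan^2(\theta_{\mathrm{LU}}-\theta)}$. The tangent and secant-squared structure suggests the authors parametrise the cap through a gnomonic-type (tangent-plane) projection centred on the leader, in which great circles become lines and the Jacobian supplies $\sec^2$.

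I would first try that substitution directly. Write the follower's offset along the leader–user great circle as $\theta_{\mathrm{LU}}-\theta$, and use $\tan$ of the transverse coordinate to describe the chord of the cap at that offset. The arcsine term then appears as the half-angle of the chord cut by the cap boundary $\cos\theta_{\mathrm{cap}}$, and differentiating gives the $\sec^2$ factor. If the identity turns out to be only approximate, which is plausible for small $\theta_{\mathrm{cap}}$, I would state it as such and check it numerically against the exact $\alpha(\theta)\sin\theta$ expression.

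For Case 2, $\theta_{\mathrm{LU}}<\theta_{\mathrm{cap}}$, the user lies under the cap. For $\theta\in[\theta_{\mathrm{cap}}-\theta_{\mathrm{LU}},\theta_{\mathrm{cap}}+\theta_{\mathrm{LU}}]$ the geometry is the same partial-intersection computation, so the Case 1 expression carries over unchanged. For $\theta<\theta_{\mathrm{cap}}-\theta_{\mathrm{LU}}$ the circle $\partial\mathcal{D}(\theta)$ lies entirely within $\mathcal{C}$.

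In the same projected coordinates, I would split the region swept along the leader–user axis into the parts on either side of the user. The offsets are $\theta_{\mathrm{LU}}-\theta$ on one side and $\theta_{\mathrm{LU}}+\theta$ on the other, so the density becomes the difference of the two chord terms, which is the stated bracket. As a sanity check I would verify continuity at $\theta=\theta_{\mathrm{cap}}-\theta_{\mathrm{LU}}$, where the second arcsine should vanish. I would also confirm that the total mass integrates to one, at least numerically.
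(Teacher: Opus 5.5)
There is a genuine gap, and it sits at the step you call the main obstacle. Your coarea computation is right, but it gives the \emph{exact} density $\alpha(\theta)\sin\theta/(\pi(1-\cos\theta_{\mathrm{cap}}))$. No substitution turns this into the stated expression; the two agree only approximately, when the cap is small compared with $\theta_{\mathrm{LU}}$. The paper gets its formula by changing the region, not by rewriting $\alpha(\theta)\sin\theta$. In Appendix~\ref{app:lemma2} the shaded set is in effect not $\mathcal{C}\cap\mathcal{D}(\theta)$. It is the segment of the cap beyond a great-circle cut perpendicular to the leader--user axis, at angular offset $\theta_{\mathrm{LU}}-\theta$ from the leader. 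Its area is the segment formula $S(\theta_{\mathrm{c}},\theta_{\mathrm{o}})=\int_{l_0}^{R_{\mathrm{sat}}\sin\theta_{\mathrm{c}}}2R_{\mathrm{sat}}\arcsin(\sqrt{R_{\mathrm{sat}}^2\sin^2\theta_{\mathrm{c}}-l^2}/R_{\mathrm{sat}})\,\mathrm{d}l$, with $\theta_{\mathrm{o}}=\theta_{\mathrm{cap}}-\theta_{\mathrm{LU}}+\theta$ and $l_0=R_{\mathrm{sat}}\cos\theta_{\mathrm{cap}}\tan(\theta_{\mathrm{LU}}-\theta)$. Case~1 then follows from the Leibniz rule on the lower limit: $\partial l_0/\partial\theta_{\mathrm{o}}=-R_{\mathrm{sat}}\cos\theta_{\mathrm{cap}}\sec^2(\theta_{\mathrm{LU}}-\theta)$ supplies the $\cos\theta_{\mathrm{cap}}\sec^2$ factor, and the arcsine is the integrand evaluated at $l_0$. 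Your gnomonic guess is apt: $l$ is exactly the central-projection coordinate on the cap's base plane, where that great circle becomes the line $l=l_0$. What is missing is the explicit step of replacing the small circle $\partial\mathcal{D}(\theta)$ by this straight cut. The lemma is the density under that straight-cut model, and your fallback of a numerical comparison does not derive it. Incidentally, your support $|\theta_{\mathrm{LU}}-\theta_{\mathrm{cap}}|\le\theta\le\theta_{\mathrm{LU}}+\theta_{\mathrm{cap}}$ is correct; the lower endpoint $\theta_{\mathrm{cap}}-\theta_{\mathrm{LU}}$ printed for Case~1 is nonpositive there.

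Case~2 fails as written. Your own geometry shows that for $\theta<\theta_{\mathrm{cap}}-\theta_{\mathrm{LU}}$ the whole circle $\partial\mathcal{D}(\theta)$ lies in $\mathcal{C}$. So $\alpha=\pi$ and the exact density is $\sin\theta/(1-\cos\theta_{\mathrm{cap}})$, independent of $\theta_{\mathrm{LU}}$. That is not the bracket with either sign, so passing to ``the same projected coordinates'' is a change of model, not a computation. Within the straight-cut model the region is the strip between the cuts at offsets $\theta_{\mathrm{LU}}-\theta$ and $\theta_{\mathrm{LU}}+\theta$. Its CDF is $[S(\theta_{\mathrm{cap}},\theta_{\mathrm{cap}}-\theta_{\mathrm{LU}}+\theta)-S(\theta_{\mathrm{cap}},\theta_{\mathrm{cap}}-\theta_{\mathrm{LU}}-\theta)]/(2\pi R_{\mathrm{sat}}^2(1-\cos\theta_{\mathrm{cap}}))$, exactly as in the paper. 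As $\theta$ grows both cuts move outward, so differentiation gives the \emph{sum} of the two chord terms, not their difference. Your sentence ``so the density becomes the difference of the two chord terms'' is therefore the failing step. With the minus sign the density vanishes identically on $[0,\theta_{\mathrm{cap}}]$ when $\theta_{\mathrm{LU}}=0$, so your planned normalization check would fail. Your continuity check at $\theta=\theta_{\mathrm{cap}}-\theta_{\mathrm{LU}}$ cannot catch this, because the second arcsine vanishes there for either sign. The paper's appendix stops at ``differentiate the piecewise CDF'', and doing so gives a plus. The minus in the statement is a sign slip that your argument reproduces rather than derives.
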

\begin{IEEEproof}
    See Appendix~\ref{app:lemma2}.
\end{IEEEproof}

As shown in Lemma~\ref{lemma2}, the explicit expression of the PDF can only be derived when the ranges of $\theta_{\mathrm{LU}}$ and $\theta$ are known, and the resulting expression is relatively complex. These reason limits the expressions' applicability in subsequent use as a lemma. Therefore, we consider the distribution of contact angles between the user and the nearest and farthest positions within the spherical cap region where the follower satellites are distributed, which serve as the lower and upper bounds of $\theta_{\mathrm{FU}}$, respectively.

\begin{lemma}\label{lemma3}
Denote the minimum and maximum contact angles between the user and follower satellites as $\theta_{\mathrm{FU}}^{\mathrm{min}}$ and $\theta_{\mathrm{FU}}^{\mathrm{max}}$. The PDF of the distribution of $\theta_{\mathrm{FU}}^{\mathrm{min}}$ is given as:
\begin{equation}
\begin{split}
& f_{\theta_{\mathrm{FU}}^{\mathrm{min}}}(\theta) = \\
&  
\begin{cases}
\left(1 - \left( \dfrac{1 + \cos(\theta_{\mathrm{cap}})}{2} \right)^{N_{\mathrm{L}}}\right)\cdot \delta(\theta), \ \mathrm{when} \ \theta = 0, \\
\dfrac{N_{\mathrm{L}} \sin(\theta + \theta_{\mathrm{cap}})}{2} \left( \dfrac{1 + \cos(\theta + \theta_{\mathrm{cap}})}{2} \right)^{N_{\mathrm{L}} - 1}, \\
\ \ \ \ \ \ \ \ \ \ \ \ \ \ \ \ \ \ \ \ \ \ \ \ \ \
\mathrm{when} \ \theta \in (0,\theta_{\max} - \theta_{\mathrm{cap}}], 
\end{cases}
\end{split}
\end{equation}
where $\delta(\cdot)$ denotes the Dirac delta function, capturing the discontinuity at $\theta = 0$. The PDF of the distribution of $\theta_{\mathrm{FU}}^{\mathrm{max}}$ is given as:
\begin{equation}
    f_{\theta_{\mathrm{FU}}^{\mathrm{max}}}(\theta) =
    \dfrac{N_{\mathrm{L}} \sin(\theta - \theta_{\mathrm{cap}})}{2} \left( \dfrac{1 + \cos(\theta - \theta_{\mathrm{cap}})}{2} \right)^{N_{\mathrm{L}} - 1},
\end{equation}
when $\theta \in \left[\theta_{\mathrm{cap}},\ \theta_{\max} + \theta_{\mathrm{cap}} \right]$.
% \begin{equation}
%     \begin{split}
%     f_{\theta_{\mathrm{FU}}^{\mathrm{max}}}(\theta)=
%     \begin{cases}
%     \left(1 - \left( \frac{1 + \cos(\theta_{\mathrm{cap}})}{2} \right)^{N_{\mathrm{L}}} \right)\cdot \delta(\theta), \theta = 0 \\
%     \frac{N_{\mathrm{L}} \sin(\theta - \theta_{\mathrm{cap}})}{2} \left( \frac{1 + \cos(\theta - \theta_{\mathrm{cap}})}{2} \right)^{N_{\mathrm{L}} - 1},\theta\in [2\theta_{\mathrm{cap}},\theta_{\mathrm{max}})\\
%     \left(1-F_{\theta_{\mathrm{LU}}}(\theta_{\mathrm{max}} - \theta_{\mathrm{cap}})\right)\cdot \delta\left(\theta\right),\theta=\theta_{\mathrm{max}}.
%     \end{cases}
%     \end{split}
% \end{equation}
\end{lemma}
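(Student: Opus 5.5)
The plan is to treat $\theta_{\mathrm{FU}}^{\mathrm{min}}$ and $\theta_{\mathrm{FU}}^{\mathrm{max}}$ as deterministic functions of the contact angle $\theta_{\mathrm{LU}}$, and then push forward the PDF of Lemma~\ref{lemma1}. The cap of followers is centered at the associated leader and has angular radius $\theta_{\mathrm{cap}}$. On the sphere, the nearest point of the cap to the user's sub-satellite direction lies along the great circle through the leader and the user. For central angles, the triangle inequality on the sphere gives this closest angle as
\begin{equation*}
\theta_{\mathrm{FU}}^{\mathrm{min}}=\max\{\theta_{\mathrm{LU}}-\theta_{\mathrm{cap}},0\},
\end{equation*}
and the farthest as
\begin{equation*}
\theta_{\mathrm{FU}}^{\mathrm{max}}=\theta_{\mathrm{LU}}+\theta_{\mathrm{cap}}.
\end{equation*}
This geometric identification is the first step. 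I would justify it by noting that the angle from the user direction to a point of the cap, viewed as a function of position in the cap, is extremized on the boundary circle at the points coplanar with the leader and the user. Its values there are $\theta_{\mathrm{LU}}\mp\theta_{\mathrm{cap}}$, and the lower value is clipped at $0$ when the user lies under the cap.

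Given these maps, the maximum case is immediate. The map $\theta_{\mathrm{LU}}\mapsto\theta_{\mathrm{LU}}+\theta_{\mathrm{cap}}$ is a shift with unit Jacobian, so
\begin{equation*}
f_{\theta_{\mathrm{FU}}^{\mathrm{max}}}(\theta)=f_{\theta_{\mathrm{LU}}}(\theta-\theta_{\mathrm{cap}}).
\end{equation*}
Substituting Lemma~\ref{lemma1} yields the stated expression on $[\theta_{\mathrm{cap}},\theta_{\max}+\theta_{\mathrm{cap}}]$.

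For the minimum I would split into two events. On the event $\{\theta_{\mathrm{LU}}<\theta_{\mathrm{cap}}\}$, the minimum equals $0$. Its probability is the CDF of $\theta_{\mathrm{LU}}$ at $\theta_{\mathrm{cap}}$, which from Lemma~\ref{lemma1} (or directly from the BPP void probability) is
\begin{equation*}
F_{\theta_{\mathrm{LU}}}(\theta_{\mathrm{cap}})=1-\left(\frac{1+\cos\theta_{\mathrm{cap}}}{2}\right)^{N_{\mathrm{L}}}.
\end{equation*}
To verify this, integrate $\frac{N_{\mathrm{L}}\sin\theta}{2}\left(\frac{1+\cos\theta}{2}\right)^{N_{\mathrm{L}}-1}$ with the substitution $u=(1+\cos\theta)/2$. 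This mass becomes the Dirac atom at $\theta=0$. On the complementary event, the shift $\theta=\theta_{\mathrm{LU}}-\theta_{\mathrm{cap}}$ gives the density $f_{\theta_{\mathrm{LU}}}(\theta+\theta_{\mathrm{cap}})$ for $\theta\in(0,\theta_{\max}-\theta_{\mathrm{cap}}]$, which matches the stated formula.

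The main obstacle is the geometric step, namely rigorously establishing that the extremal central angles are exactly $\theta_{\mathrm{LU}}\mp\theta_{\mathrm{cap}}$. I would handle it with the spherical law of cosines. For a cap point at angular distance $\alpha\le\theta_{\mathrm{cap}}$ from the leader and azimuth $\phi$ relative to the user direction,
\begin{equation*}
\cos\theta=\cos\theta_{\mathrm{LU}}\cos\alpha+\sin\theta_{\mathrm{LU}}\sin\alpha\cos\phi,
\end{equation*}
which is monotone in $\phi$ for fixed $\alpha$, so the extremes occur at $\phi\in\{0,\pi\}$ with $\theta=|\theta_{\mathrm{LU}}\mp\alpha|$. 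Optimizing over $\alpha$ then gives the claimed values. A secondary point is the support restriction and normalization of Lemma~\ref{lemma1} on $[0,\theta_{\max}]$. I would follow the paper's convention and not renormalize, so that the shifted densities inherit the same truncation.
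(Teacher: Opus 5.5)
Your proposal is correct and follows essentially the same route as the paper. The paper likewise sets $\theta_{\mathrm{FU}}^{\mathrm{min}}=\theta_{\mathrm{LU}}-\theta_{\mathrm{cap}}$, puts an atom of mass $F_{\theta_{\mathrm{LU}}}(\theta_{\mathrm{cap}})$ at $0$ when the user lies under the cap, sets $\theta_{\mathrm{FU}}^{\mathrm{max}}=\theta_{\mathrm{LU}}+\theta_{\mathrm{cap}}$, writes the CDFs as $F_{\theta_{\mathrm{LU}}}(\theta\pm\theta_{\mathrm{cap}})$ and differentiates, while your spherical-law-of-cosines argument rigorously justifies the extremal angles that the paper only asserts geometrically.
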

\begin{IEEEproof}
    See Appendix~\ref{app:lemma3}.
\end{IEEEproof}

In Lemma~\ref{lemma3}, the expressions of $\theta_{\mathrm{FU}}^{\mathrm{min}}$ and $\theta_{\mathrm{FU}}^{\mathrm{max}}$ in Lemma~\ref{lemma3} can be explicitly derived without requiring knowledge of any variable's range. In addition, they do not involve any integrals, and thus can serve as low-complexity alternatives to the expressions in Lemma~\ref{lemma2}. As $\theta_{\mathrm{cap}}$ decreases, the upper and lower bounds gradually converge, making the approximation using the expressions in Lemma~\ref{lemma3} increasingly accurate.

\subsection{Outage Probability}
This subsection shows the results of the outage probability, whose definition is given as follows.

\begin{definition}[Outage Probability] \label{definition3}
When no follower satellite provides coverage, the outage probability is defined as the probability that the SNR received by the user from the associated leader satellite is below a predefined threshold $\gamma_{\mathrm{th}}$.
\par
For a satellite cluster equipped with followers, the outage probability is defined as the probability that the SNR received by the user from all satellites in the cluster is lower than $\gamma_{\mathrm{th}}$.
\end{definition}

According to the definition, a satellite cluster is more likely to establish a stable link with the user compared to a single leader satellite. Similarly, we start with the derivation of outage probability from the simple case without follower satellites.

\begin{theorem} \label{theorem1}
The outage probability of the associated leader satellite to the user link is given as 
% \begin{equation}
% \begin{split}
%     P_{\mathrm{out}}^{\mathrm{LU}}= \int_{0}^{\theta_{\max}} 
%     F_{W_{\mathrm{U}}} \left( \frac{\gamma_{\mathrm{th}}}{\rho_{\mathrm{LU}}G_{\mathrm{LU}}\zeta_{\mathrm{U}}} \left(\frac{4\pi\sigma_{\mathrm{U}}r_{\mathrm{LU}}(\theta)}{\nu_{\mathrm{LU}}}\right)^2 \right) & \\
%     \times     f_{\theta_{\mathrm{LU}}}(\theta) \, \mathrm{d} \theta, & 
% \end{split}
% \end{equation}

\begin{equation}
\begin{split}
    P_{\mathrm{out}}^{\mathrm{LU}}= \int_{0}^{\theta_{\max}} 
    F_{W_{\mathrm{U}}} \left( \frac{\gamma_{\mathrm{th}}r_{\mathrm{LU}}^2(\theta)}{\xi_{\mathrm{LU}}}  \right) 
   f_{\theta_{\mathrm{LU}}}(\theta) \, \mathrm{d} \theta, 
\end{split}
\end{equation}

where $\gamma_{\mathrm{th}}$ is the predefined coverage threshold. $r_{\mathrm{LU}}(\theta)$ denotes the Euclidean distance between the associated leader satellite and the typical user,
\begin{equation}\label{r_LU}
    r_{\mathrm{LU}}(\theta) = \sqrt{R_{\mathrm{sat}}^2 + R_{\oplus}^2 - 2 R_{\mathrm{sat}} R_{\oplus} \cos\theta}.
\end{equation}
\end{theorem}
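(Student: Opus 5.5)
The plan is to condition on the contact angle and then average over it. Following Definition~\ref{definition3}, with no followers the outage event is $\{\mathrm{SNR}_{\mathrm{LU}}<\gamma_{\mathrm{th}}\}$, where $\mathrm{SNR}_{\mathrm{LU}}=\xi_{\mathrm{LU}}W_{\mathrm{U}}/r_{\mathrm{LU}}^2$ from the simplified form of \eqref{SNR}.

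First, I will express the distance in terms of the contact angle. The user lies on the sphere of radius $R_{\oplus}$ and the associated leader on the sphere of radius $R_{\mathrm{sat}}$. Their position vectors from the Earth's center enclose the angle $\theta_{\mathrm{LU}}$, by Definitions~\ref{definition1} and~\ref{definition2}. The law of cosines in the triangle formed by the Earth's center, the user, and the leader then gives \eqref{r_LU} directly. The randomness in $r_{\mathrm{LU}}$ therefore enters only through $\theta_{\mathrm{LU}}$, whose PDF is given by Lemma~\ref{lemma1}.

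Second, I will condition on $\theta_{\mathrm{LU}}=\theta$. The small-scale fading $W_{\mathrm{U}}$ is independent of the satellite geometry, since the BPP positions and the fading are modeled separately. Because $\xi_{\mathrm{LU}}>0$ and $r_{\mathrm{LU}}(\theta)>0$, the conditional outage probability is
\[
\Pr\!\left(W_{\mathrm{U}}<\tfrac{\gamma_{\mathrm{th}} r_{\mathrm{LU}}^2(\theta)}{\xi_{\mathrm{LU}}}\,\middle|\,\theta_{\mathrm{LU}}=\theta\right)=F_{W_{\mathrm{U}}}\!\left(\tfrac{\gamma_{\mathrm{th}} r_{\mathrm{LU}}^2(\theta)}{\xi_{\mathrm{LU}}}\right),
\]
with $F_{W_{\mathrm{U}}}$ given in \eqref{F_WU}. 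Since $W_{\mathrm{U}}$ is a continuous random variable, it does not matter whether the inequality is strict.

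Finally, the law of total probability averages the conditional outage probability over $f_{\theta_{\mathrm{LU}}}$ on $[0,\theta_{\max}]$, which yields the stated integral.

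The only subtle point is the support and normalization of $f_{\theta_{\mathrm{LU}}}$. The density in Lemma~\ref{lemma1} is restricted to $[0,\theta_{\max}]$. The event that no leader lies within $\theta_{\max}$ has probability $\left(\frac{1+\cos\theta_{\max}}{2}\right)^{N_{\mathrm{L}}}$, which is negligible for $N_{\mathrm{L}}=1000$ and $\theta_{\max}=\pi/4$. I will either treat this event as excluded, consistent with the convention in Lemma~\ref{lemma1}, or note that it contributes an additive outage term that the paper's formulation omits. With that convention fixed, the proof reduces to a routine conditioning argument.
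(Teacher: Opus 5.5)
Your proposal is correct and matches the paper's proof in Appendix~\ref{app:theorem1}. Both rewrite the outage event as $W_{\mathrm{U}}<\gamma_{\mathrm{th}}r_{\mathrm{LU}}^2/\xi_{\mathrm{LU}}$, express $r_{\mathrm{LU}}$ through $\theta_{\mathrm{LU}}$ by the law of cosines, and average the conditional CDF of $W_{\mathrm{U}}$ against $f_{\theta_{\mathrm{LU}}}$ on $[0,\theta_{\max}]$. Your remark about the missing mass $\left(\frac{1+\cos\theta_{\max}}{2}\right)^{N_{\mathrm{L}}}$ is a subtlety the paper passes over silently: the stated integral equals exactly $\mathbb{P}(\mathrm{SNR}_{\mathrm{LU}}<\gamma_{\mathrm{th}},\ \theta_{\mathrm{LU}}\le\theta_{\max})$, so ``excluding'' the no-leader event should mean intersecting with $\{\theta_{\mathrm{LU}}\le\theta_{\max}\}$ using the unnormalized density of Lemma~\ref{lemma1}, not conditioning on that event.
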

\begin{IEEEproof}
See Appendix~\ref{app:theorem1}.
\end{IEEEproof}

\par
Due to spatial diversity, the small-scale fadings from different follower satellites and the leader satellite are independently distributed. Therefore, the outage probability from the satellite cluster to the user can be derived accordingly.

\begin{theorem}\label{theorem2}
% The outage probability of a satellite cluster including $N_{\mathrm{F}}$ follower satellites is given as:
% The outage probability of a satellite cluster including $N_F$ follower satellites is given as:
% \begin{equation}
%     P_{\mathrm{out}}^{\mathrm{Cluster}} =\int_{0}^{\theta_{\mathrm{max}}} \left(P_{\mathrm{out},i}^{\mathrm{FU}}\left(\theta\right) \right)^{N_{\mathrm{F}}}\cdot f_{\theta_{\mathrm{LU}}}(\theta) \ \mathrm{d} \theta,
% \end{equation}
% where $P_{\mathrm{out},i}^{\mathrm{FU}}\left(\theta \right)$ denotes the outage probability between the user and the $i$th follower satellite,
% \begin{equation}
% \begin{split}
%     P_{\mathrm{out},i}^{\mathrm{FU}} \left(\theta\right)
%     & = \int_{\theta_1}^{\theta_2} 
%     F_{W_{\mathrm{U}}} \left( \frac{\gamma_{\mathrm{th}}}{\rho_{\mathrm{FU}}G_{\mathrm{FU}}\zeta_{\mathrm{U}}} \cdot \left(\frac{4\pi\sigma_{\mathrm{U}}r_{\mathrm{FU}} \left(\theta\right)}{\nu_{\mathrm{FU}}}\right)^2 \right) f_{\theta_{\mathrm{FU}}}\left(\theta\right)\,  \mathrm{d}\theta.
%     \end{split}
% \end{equation}
% Here, $r_{\mathrm{FU}}$ represents the distance between the follower satellite and the user,
% \begin{equation}
%     r_{\mathrm{FU}}\left(\theta\right) = \sqrt{R_{\oplus}^2 + R_{\mathrm{sat}}^2 - 2 R_{\oplus} R_{\mathrm{sat}} \cos(\theta)}
% \end{equation}

Given that a satellite cluster equipped with $N_{\mathrm{F}}$ follower satellites, the outage probability of the cluster is
\begin{equation}
P_{\mathrm{out}}^{\mathrm{Cluster}} =\int_{0}^{\theta_{\mathrm{max}}} \left(P_{\mathrm{out},i}^{\mathrm{FU}}\left(\theta\right)\right)^{N_{\mathrm{F}}}P_{\mathrm{Cond}}^{\mathrm{LU}}\left(\theta\right) f_{\theta_{\mathrm{LU}}}(\theta) \ \mathrm{d} \theta.
\end{equation}
$P_{\mathrm{Cond}}^{\mathrm{LU}}\left(\theta \right)$ is the conditional outage probability from the leader satellite given $\theta_{\mathrm{LU}}$, and it can be expressed as
% \begin{equation}
% \begin{split}
% P_{\mathrm{Cond}}^{\mathrm{LU}}\left(\theta \right) = F_{W_{\mathrm{U}}} \left( \frac{\gamma_{\mathrm{th}}}{\rho_{\mathrm{LU}}G_{\mathrm{LU}}\zeta_{\mathrm{U}}} \left(\frac{4\pi\sigma_{\mathrm{U}}r_{\mathrm{LU}}(\theta)}{\nu_{\mathrm{LU}}}\right)^2 \right).
% \end{split}
% \end{equation}

\begin{equation}
\begin{split}
P_{\mathrm{Cond}}^{\mathrm{LU}}\left(\theta \right) = F_{W_{\mathrm{U}}} \left( \frac{\gamma_{\mathrm{th}}r_{\mathrm{LU}}^2(\theta)}{\xi_{\mathrm{LU}}}  \right).
\end{split} 
\end{equation}
$P_{\mathrm{out},i}^{\mathrm{FU}}\left(\theta \right)$ denotes the outage probability for the $i$-th follower satellite to the user link, which can be expressed as,
% \begin{equation}\label{P_out_fu_i}
% \begin{split}
% & P_{\mathrm{out},i}^{\mathrm{FU}} \left(\theta\right) = \int_{0}^{2\pi} \int_{0}^{\theta_{\mathrm{cap}}} \frac{\sin \psi}{2\pi\left(1 - \cos\theta_{\mathrm{cap}}\right)}
% \\
% & \times 
%     F_{W_{\mathrm{U}}} \left( \frac{\gamma_{\mathrm{th}}}{\rho_{\mathrm{FU}}G_{\mathrm{FU}}\zeta_{\mathrm{U}}} \left(\frac{4\pi\sigma_{\mathrm{U}}r_{\mathrm{FU}} \left(\theta, \psi, \varphi \right)}{\nu_{\mathrm{FU}}}\right)^2 \right) \, \mathrm{d}\psi \mathrm{d}\varphi.
%     \end{split}
% \end{equation}

\begin{equation}\label{P_out_fu_i}
\begin{split}
P_{\mathrm{out},i}^{\mathrm{FU}} \left(\theta\right) =& \int_{0}^{2\pi} \int_{0}^{\theta_{\mathrm{cap}}} \frac{\sin \psi}{2\pi\left(1 - \cos\theta_{\mathrm{cap}}\right)}
\\
& \times 
    F_{W_{\mathrm{U}}} \left( \frac{\gamma_{\mathrm{th}}r_{\mathrm{FU}}^2\left(\theta, \psi, \varphi \right)}{\xi_{\mathrm{FU}}}  \right) \, \mathrm{d}\psi \mathrm{d}\varphi.
    \end{split}
\end{equation}
where $r_{\mathrm{FU}}$ represents the Euclidean distance between the follower satellite and the user,
\begin{equation}\label{r_FU}
\begin{aligned}
r_{\mathrm{FU}}\left(\theta, \psi, \varphi\right) & = \Big( R_{\oplus}^2 + R_{\mathrm{sat}}^2 
- 2 R_{\oplus} R_{\mathrm{sat}} \\
& \times \big( 
\sin\theta \sin\psi \cos\varphi + \cos\theta \cos\psi 
\big) \Big)^{1/2}.
\end{aligned}
\end{equation}

% \begin{equation}
%     P_{\mathrm{out}}^{\mathrm{Cluster}} =\int_{0}^{\theta_{\mathrm{max}}} \left(P_{\mathrm{out},i}^{\mathrm{FU}}\left(\theta\right) \right)^{N_{\mathrm{F}}}\cdot f_{\theta_{\mathrm{LU}}}(\theta) \ \mathrm{d} \theta,
% \end{equation}
% where $P_{\mathrm{out},i}^{\mathrm{FU}}\left(\theta \right)$ denotes the outage probability between the user and the $i$th follower satellite,
% \begin{equation}\label{P_out_fu_i}
% \begin{split}
% & P_{\mathrm{out},i}^{\mathrm{FU}} \left(\theta\right) = \int_{0}^{2\pi} \int_{0}^{\theta_{\mathrm{max}}} \frac{\sin \psi}{2\pi\left(1 - \cos\theta_{\mathrm{cap}}\right)}
% \\
% & \times 
%     F_{W_{\mathrm{U}}} \left( \frac{\gamma_{\mathrm{th}}}{\rho_{\mathrm{F}}G_{\mathrm{FU}}\zeta_{\mathrm{U}}} \left(\frac{4\pi\sigma_{\mathrm{U}}r_{\mathrm{FU}} \left(\theta, \psi, \varphi \right)}{\nu_{\mathrm{FU}}}\right)^2 \right) \, \mathrm{d}\psi \mathrm{d}\varphi.
%     \end{split}
% \end{equation}
% Here, $r_{\mathrm{FU}}$ represents the distance between the follower satellite and the user,
% \begin{equation}
% \begin{aligned}
% r_{\mathrm{FU}}\left(\theta, \psi, \varphi\right) & = \Big( R_{\oplus}^2 + R_{\mathrm{sat}}^2 
% - 2 R_{\oplus} R_{\mathrm{sat}} \\
% & \times \big( 
% \sin\theta \sin\psi \cos\varphi + \cos\theta \cos\psi 
% \big) \Big)^{1/2}.
% \end{aligned}
% \end{equation}
\end{theorem}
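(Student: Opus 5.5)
The plan is to condition on the contact angle $\theta_{\mathrm{LU}}=\theta$ of the associated leader and then remove the conditioning with Lemma~\ref{lemma1}. By Definition~\ref{definition3}, a cluster outage is the event that the leader link and all $N_{\mathrm{F}}$ follower links are simultaneously below $\gamma_{\mathrm{th}}$. So we write
\begin{equation*}
P_{\mathrm{out}}^{\mathrm{Cluster}}=\int_0^{\theta_{\max}}\mathbb{P}\Big(\mathrm{SNR}_{\mathrm{LU}}<\gamma_{\mathrm{th}},\ \mathrm{SNR}_{\mathrm{FU},i}<\gamma_{\mathrm{th}}\ \forall i \,\Big|\, \theta_{\mathrm{LU}}=\theta\Big) f_{\theta_{\mathrm{LU}}}(\theta)\,\mathrm{d}\theta .
\end{equation*}

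\textbf{Factorizing the conditional probability.} Conditional on $\theta$, the leader link depends only on $W_{\mathrm{U}}$ for that link. Each follower link depends on its own fading and its own position in the cap. By the model, the fadings are independent, which is the spatial diversity assumption stated before the theorem. The follower positions are i.i.d. uniform in the cap (homogeneous BPP) and independent of the leader fading. Hence the conditional probability is $P_{\mathrm{Cond}}^{\mathrm{LU}}(\theta)\prod_{i}P_{\mathrm{out},i}^{\mathrm{FU}}(\theta)$. Since the followers are i.i.d., the product equals $(P_{\mathrm{out},i}^{\mathrm{FU}}(\theta))^{N_{\mathrm{F}}}$.

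\textbf{Leader factor.} For the leader factor we reuse the argument of Theorem~\ref{theorem1}. With $r_{\mathrm{LU}}(\theta)$ fixed,
\[
\mathrm{SNR}_{\mathrm{LU}}<\gamma_{\mathrm{th}} \iff W_{\mathrm{U}}<\gamma_{\mathrm{th}}r_{\mathrm{LU}}^2(\theta)/\xi_{\mathrm{LU}},
\]
which gives $F_{W_{\mathrm{U}}}$ evaluated at that point.

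\textbf{Follower factor.} We use local spherical coordinates centered at the leader. Let $\psi\in[0,\theta_{\mathrm{cap}}]$ be the central angle from the leader to the follower and $\varphi\in[0,2\pi)$ its azimuth, measured from the great circle through the leader and the user's subsatellite point. The uniform density on the cap with respect to $\mathrm{d}\psi\,\mathrm{d}\varphi$ is $\sin\psi/(2\pi(1-\cos\theta_{\mathrm{cap}}))$, because the cap area is $2\pi R_{\mathrm{sat}}^2(1-\cos\theta_{\mathrm{cap}})$. The cosine of the central angle between follower and user follows from the spherical law of cosines:
\[
\cos\theta\cos\psi+\sin\theta\sin\psi\cos\varphi .
\]
The law of cosines in the plane through the Earth's center then gives $r_{\mathrm{FU}}(\theta,\psi,\varphi)$. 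Conditioning on $(\psi,\varphi)$ gives $F_{W_{\mathrm{U}}}(\gamma_{\mathrm{th}}r_{\mathrm{FU}}^2/\xi_{\mathrm{FU}})$, and averaging over the cap density yields (\ref{P_out_fu_i}).

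\textbf{Main obstacle.} The delicate step is justifying the conditional independence of the follower outage events given $\theta$. Their distances share the common parameter $\theta$, so they are dependent unconditionally but independent conditionally. For this reason the $N_{\mathrm{F}}$-th power must sit inside the $\theta$-integral rather than be taken of an unconditional probability. A second point is that this route integrates over the cap directly, which avoids the case-split PDF of Lemma~\ref{lemma2}. Finally, the geometric constraint on $\theta_{\max}$ guarantees that every follower is visible, so no LoS indicator enters.
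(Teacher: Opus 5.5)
Your proposal is correct and follows essentially the same route as the paper's proof. Both condition on $\theta_{\mathrm{LU}}=\theta$, use the independence of the fadings and the i.i.d.\ uniform cap positions to factor the conditional outage into $P_{\mathrm{Cond}}^{\mathrm{LU}}(\theta)\bigl(P_{\mathrm{out},i}^{\mathrm{FU}}(\theta)\bigr)^{N_{\mathrm{F}}}$, average each follower link against $\sin\psi/(2\pi(1-\cos\theta_{\mathrm{cap}}))$, and then integrate against $f_{\theta_{\mathrm{LU}}}$. Your spherical-law-of-cosines derivation of $r_{\mathrm{FU}}$ and your explicit remark on conditional versus unconditional independence fill in steps the paper states without derivation.
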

\begin{IEEEproof}
See Appendix~\ref{app:theorem2}.
\end{IEEEproof}

Since the expression in Theorem~\ref{theorem1} involves a triple integral, we provide upper and lower bounds of the outage probability based on Lemma~\ref{lemma3}, which involves only a single integral.

\begin{corollary} \label{corollary1}
A lower bound of the outage probability of the cluster is given as
\begin{equation} \label{outagelower}
\begin{split}
& P_{\mathrm{out}}^{ \mathrm{lower}} = \left(1 - \left( \dfrac{1 + \cos(\theta_{\mathrm{cap}})}{2} \right)^{N_{\mathrm{L}}}\right)  \\
& \times 
\left(\int _{0}^{\theta_{\mathrm{cap}}}P_{\mathrm{Cond}}^{\mathrm{LU}}\left(\theta \right)\mathrm{d} \theta\right) P_{\mathrm{FU}}^{\mathrm{bound}}(0) \\
& + \int_{0^+}^{\theta_{\mathrm{max}}-\theta_{\mathrm{cap}}} P_{\mathrm{FU}}^{\mathrm{bound}}(\theta)P_{\mathrm{Cond}}^{\mathrm{LU}}\left(\theta+\theta_{\mathrm{cap}} \right) f_{\theta_{\mathrm{FU}}^{\mathrm{min}}}(\theta) \mathrm{d}\theta.
\end{split}
\end{equation}
An upper bound of the outage probability of the cluster is given as
\begin{equation} \label{outageupper}
P_{\mathrm{out}}^{\mathrm{upper}} =  \int_{\theta_{\mathrm{cap}}}^{\theta_{\mathrm{max}}+\theta_{\mathrm{cap}}} \! \! \! \! \! \! \! \! \! \! \! \! \! \! \! P_{\mathrm{FU}}^{\mathrm{bound}}(\theta)P_{\mathrm{Cond}}^{\mathrm{LU}}\left(\theta-\theta_{\mathrm{cap}} \right) f_{\theta_{\mathrm{FU}}^{\mathrm{max}}}(\theta) \mathrm{d}\theta.
\end{equation}

\par
$P_{\mathrm{FU}}^{\mathrm{bound}}(\theta)$ in (\ref{outagelower}) and (\ref{outageupper}) are defined as
% \begin{equation}
% \begin{split}
% P_{\mathrm{FU}}^{\mathrm{bound}} (\theta)&=\Bigg(F_{W_{\mathrm{U}}}\Bigg( \frac{\gamma_{\mathrm{th}}}{\rho_{\mathrm{FU}}G_{\mathrm{FU}}\zeta_{\mathrm{U}}} \\
% &\times\left(\frac{4\pi\sigma_{\mathrm{U}}r_{\mathrm{FU}}^{\mathrm{bound}} \left(\theta\right)}{\nu_{\mathrm{FU}}}\right)^2 \Bigg)\Bigg)^{N_{\mathrm{F}}},
% \end{split}
% \end{equation}

\begin{equation}
\begin{split}
P_{\mathrm{FU}}^{\mathrm{bound}} (\theta)&=\Bigg(F_{W_{\mathrm{U}}}\left( \frac{\gamma_{\mathrm{th}}(r_{\mathrm{FU}}^{\mathrm{bound}}\left(\theta\right))^2}{\xi_{\mathrm{FU}}}  \right)\Bigg)^{N_{\mathrm{F}}},
\end{split}
\end{equation}
where $r_{\mathrm{FU}}^{\mathrm{bound}}\left(\theta\right)$ represents the Euclidean distance between the user and the closest or farthest follower satellites,
\begin{equation}\label{r_FU_bound}
\begin{split}
r_{\mathrm{FU}}^{\mathrm{bound}}\left(\theta\right) = \sqrt{ R_{\oplus}^2 + R_{\mathrm{sat}}^2 - 2 R_{\oplus} R_{\mathrm{sat}} \cos\theta}.
\end{split}
\end{equation}
\end{corollary}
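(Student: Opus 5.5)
The plan is to start from the exact expression of Theorem~\ref{theorem2} and bound its inner factor $\left(P_{\mathrm{out},i}^{\mathrm{FU}}(\theta)\right)^{N_{\mathrm{F}}}$ by replacing every follower's random position with the extreme positions of the cap. Fix $\theta_{\mathrm{LU}}=\theta$. By the spherical triangle inequality, every point of the cap of angular radius $\theta_{\mathrm{cap}}$ centred at the leader has central angle to the user between $\max(\theta-\theta_{\mathrm{cap}},0)$ and $\theta+\theta_{\mathrm{cap}}$. The distance $r(\cdot)$ in (\ref{r_FU_bound}) is increasing in the central angle on $[0,\pi]$, and $F_{W_{\mathrm{U}}}$ is nondecreasing. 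Hence for each follower
\begin{equation*}
F_{W_{\mathrm{U}}}\!\left(\tfrac{\gamma_{\mathrm{th}} r^2(\max(\theta-\theta_{\mathrm{cap}},0))}{\xi_{\mathrm{FU}}}\right)\le P_{\mathrm{out},i}^{\mathrm{FU}}(\theta)\le F_{W_{\mathrm{U}}}\!\left(\tfrac{\gamma_{\mathrm{th}} r^2(\theta+\theta_{\mathrm{cap}})}{\xi_{\mathrm{FU}}}\right).
\end{equation*}
The middle term is an average over the cap of a monotone function of $r_{\mathrm{FU}}$ in (\ref{r_FU}), so it is sandwiched between that function's values at the extremes. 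Raising to the power $N_{\mathrm{F}}$, which is legitimate because the fadings are independent and the factors lie in $[0,1]$, gives $P_{\mathrm{FU}}^{\mathrm{bound}}(\max(\theta-\theta_{\mathrm{cap}},0))\le (P_{\mathrm{out},i}^{\mathrm{FU}}(\theta))^{N_{\mathrm{F}}}\le P_{\mathrm{FU}}^{\mathrm{bound}}(\theta+\theta_{\mathrm{cap}})$.

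Next I substitute these pointwise bounds into the integral of Theorem~\ref{theorem2}. The factors $P_{\mathrm{Cond}}^{\mathrm{LU}}(\theta)$ and $f_{\theta_{\mathrm{LU}}}(\theta)$ are nonnegative, so the integral inequalities follow directly.

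For the upper bound, I change variables $\vartheta=\theta+\theta_{\mathrm{cap}}$. The range $[0,\theta_{\max}]$ becomes $[\theta_{\mathrm{cap}},\theta_{\max}+\theta_{\mathrm{cap}}]$, and the density $f_{\theta_{\mathrm{LU}}}(\vartheta-\theta_{\mathrm{cap}})$ is exactly $f_{\theta_{\mathrm{FU}}^{\mathrm{max}}}(\vartheta)$ by Lemma~\ref{lemma3}, since $\theta_{\mathrm{FU}}^{\mathrm{max}}=\theta_{\mathrm{LU}}+\theta_{\mathrm{cap}}$. This yields (\ref{outageupper}).

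For the lower bound, I split the range of $\theta_{\mathrm{LU}}$ at $\theta_{\mathrm{cap}}$. On $[\theta_{\mathrm{cap}},\theta_{\max}]$, set $\vartheta=\theta-\theta_{\mathrm{cap}}\in(0,\theta_{\max}-\theta_{\mathrm{cap}}]$. Using the continuous part of $f_{\theta_{\mathrm{FU}}^{\mathrm{min}}}$ from Lemma~\ref{lemma3}, this gives the second integral in (\ref{outagelower}). On $[0,\theta_{\mathrm{cap}})$ the minimum angle is $0$, so the follower factor is bounded below by the constant $P_{\mathrm{FU}}^{\mathrm{bound}}(0)$. This piece corresponds to the atom of $\theta_{\mathrm{FU}}^{\mathrm{min}}$ at $0$, whose mass is the coefficient in Lemma~\ref{lemma3}.

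The main obstacle is this atom term. The exact contribution is $P_{\mathrm{FU}}^{\mathrm{bound}}(0)\int_0^{\theta_{\mathrm{cap}}}P_{\mathrm{Cond}}^{\mathrm{LU}}(\theta)f_{\theta_{\mathrm{LU}}}(\theta)\,\mathrm{d}\theta$. The stated form, the atom mass times $\int_0^{\theta_{\mathrm{cap}}}P_{\mathrm{Cond}}^{\mathrm{LU}}\,\mathrm{d}\theta$, corresponds to a simplification in which the leader term is decoupled from the density. I would first try to justify it as a valid lower bound, or as the paper's intended approximation, by using the fact that $P_{\mathrm{Cond}}^{\mathrm{LU}}$ is increasing in $\theta$ on $[0,\theta_{\mathrm{cap}}]$. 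Failing that, I would present the exact mixed term and note that the stated expression is obtained by replacing the conditional average of $P_{\mathrm{Cond}}^{\mathrm{LU}}$ over the atom event with the integral written in (\ref{outagelower}). All the monotonicity steps themselves are routine.
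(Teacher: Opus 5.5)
The upper bound and the continuous part of the lower bound are correct. The atom term in (\ref{outagelower}) is not established, and your proposed fix for it does not work as stated.

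Your approach is the one the paper uses implicitly. The paper gives no separate proof; it just evaluates the follower factor of Theorem~\ref{theorem2} at the nearest and farthest cap points described in Lemma~\ref{lemma3}. Your version justifies this better, through the spherical triangle inequality and the monotonicity of $r$ and $F_{W_{\mathrm{U}}}$. The shift $\vartheta=\theta+\theta_{\mathrm{cap}}$ reproduces (\ref{outageupper}) exactly. The shift $\vartheta=\theta-\theta_{\mathrm{cap}}$ on $[\theta_{\mathrm{cap}},\theta_{\max}]$ reproduces the second integral of (\ref{outagelower}) exactly.

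You located the atom-term problem correctly. Your argument yields $P_{\mathrm{FU}}^{\mathrm{bound}}(0)\int_0^{\theta_{\mathrm{cap}}}P_{\mathrm{Cond}}^{\mathrm{LU}}(\theta)f_{\theta_{\mathrm{LU}}}(\theta)\,\mathrm{d}\theta$. The statement instead has the factor $F_{\theta_{\mathrm{LU}}}(\theta_{\mathrm{cap}})\int_0^{\theta_{\mathrm{cap}}}P_{\mathrm{Cond}}^{\mathrm{LU}}(\theta)\,\mathrm{d}\theta$. That is a probability times an angle, which looks like a dropped density in the paper.

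Monotonicity of $P_{\mathrm{Cond}}^{\mathrm{LU}}$ alone only gives two separate estimates:
\begin{itemize}
\item the stated factor satisfies $F_{\theta_{\mathrm{LU}}}(\theta_{\mathrm{cap}})\int_0^{\theta_{\mathrm{cap}}}P_{\mathrm{Cond}}^{\mathrm{LU}}\,\mathrm{d}\theta\le\theta_{\mathrm{cap}}P_{\mathrm{Cond}}^{\mathrm{LU}}(\theta_{\mathrm{cap}})F_{\theta_{\mathrm{LU}}}(\theta_{\mathrm{cap}})$;
\item the exact factor satisfies $\int_0^{\theta_{\mathrm{cap}}}P_{\mathrm{Cond}}^{\mathrm{LU}}f_{\theta_{\mathrm{LU}}}\,\mathrm{d}\theta\ge P_{\mathrm{Cond}}^{\mathrm{LU}}(0)F_{\theta_{\mathrm{LU}}}(\theta_{\mathrm{cap}})$.
\end{itemize}
These do not compare without more information. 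If $f_{\theta_{\mathrm{LU}}}$ concentrates its mass near $0$, the exact term is about $P_{\mathrm{Cond}}^{\mathrm{LU}}(0)F_{\theta_{\mathrm{LU}}}(\theta_{\mathrm{cap}})$. It then falls below the stated term once $P_{\mathrm{Cond}}^{\mathrm{LU}}(0)<\int_0^{\theta_{\mathrm{cap}}}P_{\mathrm{Cond}}^{\mathrm{LU}}\,\mathrm{d}\theta$.

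You need one of two extra hypotheses:
\begin{itemize}
\item $\theta_{\mathrm{cap}}P_{\mathrm{Cond}}^{\mathrm{LU}}(\theta_{\mathrm{cap}})\le P_{\mathrm{Cond}}^{\mathrm{LU}}(0)$, with $\theta_{\mathrm{cap}}$ in radians; or
\item $f_{\theta_{\mathrm{LU}}}$ nondecreasing on $[0,\theta_{\mathrm{cap}}]$, i.e.\ $\cot\theta_{\mathrm{cap}}\ge(N_{\mathrm{L}}-1)\tan(\theta_{\mathrm{cap}}/2)$, together with $\theta_{\mathrm{cap}}\le1$~rad. Chebyshev's integral inequality then gives $F_{\theta_{\mathrm{LU}}}(\theta_{\mathrm{cap}})\int_0^{\theta_{\mathrm{cap}}}P_{\mathrm{Cond}}^{\mathrm{LU}}\,\mathrm{d}\theta\le\theta_{\mathrm{cap}}\int_0^{\theta_{\mathrm{cap}}}P_{\mathrm{Cond}}^{\mathrm{LU}}f_{\theta_{\mathrm{LU}}}\,\mathrm{d}\theta$.
\end{itemize}
Both conditions hold for the defaults $N_{\mathrm{L}}=1000$ and $\theta_{\mathrm{cap}}=1^\circ$. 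There the stated expression is a valid but looser lower bound, with that term shrunk by roughly $\theta_{\mathrm{cap}}\approx0.017$. In general, you must either state one of these hypotheses or replace the atom term with the one you derived.
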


\subsection{Average Data Rate}
This subsection shows the results of the average data rate, whose definition is given as follows.

\begin{definition}[Average Data Rate of Leader Satellite]\label{definition4}
According to the Shannon capacity formula, the average data rate from the leader satellite is mathematically defined as
\begin{equation}
    \mathcal{R}_{\mathrm{LU}} = B_{\mathrm{LU}} \log_2 \left( 1 + {\mathrm{SNR}}_{\mathrm{LU}} \right),
\end{equation}
where $B_{\mathrm{LU}}$ is the bandwidth of the leader satellite-user link.
\end{definition}

Next, we start with the derivation of the average data rate from the simple case without follower satellites.

\begin{theorem}\label{theorem3}
The average data rate of the associated leader satellite is given as:
% \begin{equation}\label{R_LU}
% \begin{split}
% & \mathcal{R}_{\mathrm{LU}} = \int_{0}^{\theta_{\mathrm{max}}} \int_0^{\infty} B_{\mathrm{LU}} f_{W_{\mathrm{U}}}(w) f_{\theta_{\mathrm{LU}}}(\theta)  \\
% & \times \log_2\left(1 + \rho_{\mathrm{LU}} G_{\mathrm{LU}} \zeta_{\mathrm{U}} w \left( \frac{\nu_{\mathrm{LU}}}{4\pi \sigma_{\mathrm{U}}r_{\mathrm{LU}}(\theta)} \right)^2 \right) \mathrm{d}w \mathrm{d}\theta.
% \end{split}
% \end{equation}

\begin{equation}\label{R_LU}
\begin{split}
\mathcal{R}_{\mathrm{LU}} & =  \int_{0}^{\theta_{\mathrm{max}}} \int_0^{\infty} B_{\mathrm{LU}} f_{W_{\mathrm{U}}}(w)  \\
& \times f_{\theta_{\mathrm{LU}}}(\theta) \log_2\left(1 +  
 \frac{\xi_{\mathrm{LU}}w}{r_{\mathrm{LU}}^2(\theta)}\right) \mathrm{d}w \mathrm{d}\theta.
\end{split}
\end{equation}
\end{theorem}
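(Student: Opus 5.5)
The plan is to write $\mathcal{R}_{\mathrm{LU}}$ as the expectation of the Shannon rate in Definition~\ref{definition4} over the two sources of randomness in $\mathrm{SNR}_{\mathrm{LU}}$. These are the small-scale fading power $W_{\mathrm{U}}$ and the contact angle $\theta_{\mathrm{LU}}$, which fixes the distance $r_{\mathrm{LU}}$ through (\ref{r_LU}). So I will start from
\begin{equation*}
\mathcal{R}_{\mathrm{LU}} = \mathbb{E}_{\theta_{\mathrm{LU}},W_{\mathrm{U}}}\left[ B_{\mathrm{LU}} \log_2\left(1 + \frac{\xi_{\mathrm{LU}} W_{\mathrm{U}}}{r_{\mathrm{LU}}^2(\theta_{\mathrm{LU}})}\right)\right],
\end{equation*}
where I have used the rewritten SNR expression $\mathrm{SNR}_{\mathrm{LU}} = \xi_{\mathrm{LU}} W_{\mathrm{U}}/r_{\mathrm{LU}}^2$ from the channel model.

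The key steps, in order, are as follows.
\begin{enumerate}
\item Apply the law of total expectation and condition on $\theta_{\mathrm{LU}}=\theta$. This turns $r_{\mathrm{LU}}(\theta)$ into a deterministic function of $\theta$ via (\ref{r_LU}).
\item Use the fact that $W_{\mathrm{U}}$ is independent of the satellite geometry. Small-scale fading is modeled independently of the BPP positions, so the conditional expectation is an integral over $w\in[0,\infty)$ against the Gamma-approximated PDF $f_{W_{\mathrm{U}}}(w)$.
\item Average over $\theta$ using $f_{\theta_{\mathrm{LU}}}(\theta)$ from Lemma~\ref{lemma1} on $[0,\theta_{\max}]$.
\item Combine the two integrals into the joint density $f_{W_{\mathrm{U}}}(w) f_{\theta_{\mathrm{LU}}}(\theta)$, and pull the constant $B_{\mathrm{LU}}$ inside. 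This yields (\ref{R_LU}).
\end{enumerate}

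The only step that needs care is justifying the exchange and iteration of the integrals (Fubini/Tonelli). The integrand $\log_2(1+\xi_{\mathrm{LU}} w/r_{\mathrm{LU}}^2)$ is nonnegative, so Tonelli applies directly. For finiteness, $r_{\mathrm{LU}}(\theta)\ge h_{\mathrm{sat}}>0$ on $[0,\theta_{\max}]$. Also $\log_2(1+cw)\le cw/\ln 2$, and the Gamma approximation has finite mean $m_1 m_2$. Hence the double integral is finite.

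A further subtlety concerns the support of Lemma~\ref{lemma1}. The density there is stated on $[0,\theta_{\max}]$, and I would note that it should be interpreted conditionally on the user being served, i.e., on $\theta_{\mathrm{LU}}\le\theta_{\max}$. The integration range then matches the support of $f_{\theta_{\mathrm{LU}}}$ as given, and I take that lemma as stated. I expect no real obstacle beyond this bookkeeping; the result is essentially a direct averaging argument parallel to the proof of Theorem~\ref{theorem1}, with the indicator of outage replaced by the rate function.
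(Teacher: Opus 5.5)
Your proposal is correct and follows the paper's own proof. The paper writes $\mathcal{R}_{\mathrm{LU}}$ as the expectation of the Shannon rate over $\theta_{\mathrm{LU}}$ and $W_{\mathrm{U}}$ and substitutes the two densities; your Tonelli and finiteness justification adds rigor the paper omits. One small caveat on your support remark: the density in Lemma~\ref{lemma1} is not renormalized on $[0,\theta_{\max}]$. The stated integral is therefore $\mathbb{E}\bigl[B_{\mathrm{LU}}\log_2(1+\mathrm{SNR}_{\mathrm{LU}})\,\mathbbm{1}\{\theta_{\mathrm{LU}}\le\theta_{\max}\}\bigr]$, i.e.\ zero rate when no leader is in range, rather than a conditional expectation. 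A true conditional expectation would require dividing by $1-\left(\frac{1+\cos\theta_{\max}}{2}\right)^{N_{\mathrm{L}}}$, which is $\approx 1$ for the default parameters.
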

\begin{IEEEproof}
See Appendix~\ref{app:theorem3}.
\end{IEEEproof}

\par
Next, after the follower satellites are deployed, the leader satellite allocates most of its power to transmitting signals to the users, while the remaining power is used to establish links with the follower satellites. The follower satellites act as relays, forwarding the leader's messages to the users.

\begin{definition}[Average Data Rate of Cluster] \label{definition5}
The average data rate of a satellite cluster $\mathcal{R}_{\mathrm{LFs}}$ is defined as the summation of the average data rate from all satellite-user links. Mathematically, it is expressed as
\begin{equation}\label{def:R_c}
\mathcal{R}_{\mathrm{LFs}} = \sum_{i=1}^{N_{\mathrm{F}}} \min \left\{  \mathcal{R}_{\mathrm{LF},i},  \mathcal{R}_{\mathrm{FU},i} \right\}+\mathcal{R}_{\mathrm{LU}},
\end{equation}
where $\mathcal{R}_{\mathrm{LF},i}$ and $\mathcal{R}_{\mathrm{FU},i}$ denote the data rates of the leader-to-$i$-th-follower and $i$-th-follower-to-user links, respectively.
\end{definition}

\par
Since the distance from the leader to follower satellites is much shorter than the distance from the follower satellite to the user, only a small portion of the power needs to be allocated for the inter-satellite links. Although the data is relayed, the total power of all satellites in the cluster is significantly higher than that of a single leader satellite, making it possible to improve the average data rate with the assistance of the follower satellites. Next, the average data rate of the satellite cluster is derived in the following theorem.

\begin{theorem}\label{theorem4}
The average data rate of the satellite cluster is given as
\begin{equation}
\begin{split}
\mathcal{R}_{\mathrm{LFs}}=N_{\mathrm{F}} \int_{0}^{\infty}E(w)f_{W_{\mathrm{U}}}\left(w\right) \mathrm{d}w+\mathcal{R}_{\mathrm{LU}},
\end{split}
\end{equation}
where $\mathcal{R}_{\mathrm{LU}}$ is given in (\ref{R_LU}) and 
\begin{equation} \label{E(w)}
\begin{split}
& E\left(w\right)=\int_{0}^{2\pi} \int_{0}^{\theta_{\mathrm{cap}}} \int_{0}^{\theta_{\mathrm{max}}}\frac{\sin \psi \, f_{\theta_{\mathrm{LU}}}(\theta)}{2\pi\left(1 - \cos\theta_{\mathrm{cap}}\right)}  \\
&\times \min\left\{ \mathcal{R}_{\mathrm{LF},i}(\psi), \mathcal{R}_{\mathrm{FU},i}(w,\theta,\psi,\varphi) \right\} \mathrm{d}\theta\mathrm{d}\psi \mathrm{d}\varphi.
   \end{split}
\end{equation}
The data rates associated with the leader-to-follower and follower-to-user links for the $i$-th follower satellite are respectively expressed as
% \begin{sequation}
% \mathcal{R}_{\mathrm{LF},i}\left(\psi\right) = B_{\mathrm{LF}} \log_2\left(1 + \frac{\rho_{\mathrm{LF}} G_{\mathrm{LF}} \zeta_{\mathrm{F}} W_{\mathrm{F}}\left(\nu_{\mathrm{LF}}\right)^2}{\left(4\pi \sigma_{\mathrm{F}}r_{\mathrm{LF}}\left(\psi\right)\right)^2}\right),
% \end{sequation}

\begin{sequation}
\mathcal{R}_{\mathrm{LF},i}\left(\psi\right) = B_{\mathrm{LF}} \log_2\left(1 + \xi_{\mathrm{LF}} \, 
 \frac{W_{\mathrm{F}}}{r_{\mathrm{LF}}^2\left(\psi\right)}\right),
\end{sequation}

% \begin{sequation} \mathcal{R}_{\mathrm{FU},i}\left(w,\theta,\varphi,\psi\right) = B_{\mathrm{FU}} \log_2\left(1 + \frac{\rho_{\mathrm{FU}} G_{\mathrm{FU}} \zeta_{\mathrm{U}} w\left(\nu_{\mathrm{FU}}\right)^2}{\left(4\pi \sigma_{\mathrm{U}}r_{\mathrm{FU}}\left(\theta,\varphi,\psi\right)\right)^2} \right),
% \end{sequation}
\begin{sequation} \mathcal{R}_{\mathrm{FU},i}\left(w,\theta,\varphi,\psi\right) = B_{\mathrm{FU}} \log_2\left(1 + \xi_{\mathrm{FU}} \, 
 \frac{w}{r_{\mathrm{FU}}^2\left(\theta,\varphi,\psi\right)} \right),
\end{sequation}
where $r_{\mathrm{FU}}\left(\theta,\varphi,\psi\right)$ is defined in (\ref{r_FU}) and $r_{\mathrm{LF}}\left(\psi\right)$ is expressed as
\begin{equation}
r_{\mathrm{LF}}\left(\psi\right) = R_{\mathrm{sat}} \sqrt{2 (1 - \cos \psi)}.
\end{equation}
\end{theorem}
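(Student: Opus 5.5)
The plan is to take expectations term by term in Definition~\ref{definition5}. By linearity,
\begin{equation*}
\mathbb{E}[\mathcal{R}_{\mathrm{LFs}}]=\sum_{i=1}^{N_{\mathrm{F}}}\mathbb{E}\big[\min\{\mathcal{R}_{\mathrm{LF},i},\mathcal{R}_{\mathrm{FU},i}\}\big]+\mathbb{E}[\mathcal{R}_{\mathrm{LU}}].
\end{equation*}
The last term is exactly the expression in Theorem~\ref{theorem3}, so it can be quoted directly. Because the $N_{\mathrm{F}}$ followers of the associated leader are i.i.d. uniform in the cap, and their fadings are i.i.d. copies of $W_{\mathrm{U}}$, every summand has the same law. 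The sum therefore collapses to $N_{\mathrm{F}}$ times the expectation for a generic follower $i$, which accounts for the factor $N_{\mathrm{F}}$ in the statement.

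For a single follower I would set up coordinates centered at the associated leader. Let $\psi\in[0,\theta_{\mathrm{cap}}]$ be the central angle between the leader and the follower, and $\varphi\in[0,2\pi)$ its azimuth measured from the great circle through the leader and the user. The homogeneous BPP on the cap gives the density
\begin{equation*}
\frac{\sin\psi}{2\pi(1-\cos\theta_{\mathrm{cap}})},
\end{equation*}
as already used in (\ref{P_out_fu_i}). The spherical law of cosines then gives the cosine of the follower--user central angle as $\sin\theta\sin\psi\cos\varphi+\cos\theta\cos\psi$, with $\theta=\theta_{\mathrm{LU}}$, which yields $r_{\mathrm{FU}}$ in (\ref{r_FU}). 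The leader--follower chord on the sphere of radius $R_{\mathrm{sat}}$ is $r_{\mathrm{LF}}(\psi)=R_{\mathrm{sat}}\sqrt{2(1-\cos\psi)}$. Substituting these distances into the SNR expression gives $\mathcal{R}_{\mathrm{LF},i}(\psi)$ and $\mathcal{R}_{\mathrm{FU},i}(w,\theta,\psi,\varphi)$. Since the inter-satellite link has no small-scale fading, $W_{\mathrm{F}}$ is a constant (unit power), so $\mathcal{R}_{\mathrm{LF},i}$ depends only on $\psi$.

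Next I would condition on $\theta_{\mathrm{LU}}=\theta$, on the follower position $(\psi,\varphi)$, and on the fading $W_{\mathrm{U}}=w$ of the follower--user link. These quantities are mutually independent: $\theta$ has density $f_{\theta_{\mathrm{LU}}}$ from Lemma~\ref{lemma1}, $(\psi,\varphi)$ is independent of the leader's location by construction of the cluster model, and the fading is independent of geometry. The joint density is therefore the product of the three marginals. By the law of total expectation, the per-follower mean is a quadruple integral of the $\min$ against this product density. Integrating over $(\theta,\psi,\varphi)$ first, with $w$ fixed, gives $E(w)$ in (\ref{E(w)}); the outer integral against $f_{W_{\mathrm{U}}}(w)$ then finishes the term.

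The main point needing care is justifying this independence and the choice of coordinates. The user's relative azimuth can be fixed at $\varphi=0$ by rotational symmetry of the cap, and $\theta_{\mathrm{LU}}$ does not depend on the follower positions. Because the user is served by the nearest leader and not by a follower, the nearest-leader conditioning does not distort the followers' BPP. With these points established, the rest is bookkeeping: Fubini applies because the integrand is nonnegative, which permits the stated order of integration.
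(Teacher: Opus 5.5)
Your proposal is correct and follows the paper's proof in Appendix~\ref{app:theorem4} essentially step for step. The shared steps are: linearity of expectation applied to Definition~\ref{definition5}, with identically distributed follower terms giving the factor $N_{\mathrm{F}}$; the same expressions for $r_{\mathrm{LF}}(\psi)$ and $r_{\mathrm{FU}}$; and integration of the $\min$ against the product density $f_{\theta_{\mathrm{LU}}}(\theta)f_{\psi}(\psi)f_{\varphi}(\varphi)f_{W_{\mathrm{U}}}(w)$, with the inner triple integral defining $E(w)$. Your observation that only identical distribution is needed to pull out $N_{\mathrm{F}}$ (the paper also invokes independence) is a small sharpening, not a different route.
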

\begin{IEEEproof}
See Appendix~\ref{app:theorem4}.
\end{IEEEproof}

\par
Since the expression of $\mathcal{R}_{\mathrm{LFs}}$ involves a four-fold integral with high computational complexity, it is necessary to provide a low-complexity approximation. The following corollary provides upper and lower bounds for $\mathcal{R}_{\mathrm{LFs}}$.

\begin{corollary} \label{corollary2}
When $\theta_{\mathrm{cap}} \ll \theta_{\max}$, an upper bound of the average data rate of the satellite cluster is given as
% \begin{equation}
% \begin{split}
% &\mathcal{R}_{\mathrm{LFs}}^{\mathrm{upper}}
% = N_{\mathrm{F}} B_{\mathrm{FU}} \int_{0}^{\theta_{\mathrm{max}}-\theta_{\mathrm{cap}}} \int_0^{\infty}
%  f_{W_{\mathrm{U}}}(w) f_{\theta_{\mathrm{FU}}^{\mathrm{min}}}(\theta) \\
% & \times \log_2\left(1 + \frac{\rho_{\mathrm{FU}} G_{\mathrm{FU}} \zeta_{\mathrm{U}} w\left(\nu_{\mathrm{FU}}\right)^2}{\left(4\pi \sigma_{\mathrm{U}}r_{\mathrm{FU}}^{\mathrm{bound}}\left(\theta\right)\right)^2}\right)
% \mathrm{d}w\, \mathrm{d}\theta+\mathcal{R}_{\mathrm{LU}},
% \end{split}
% \end{equation}
\begin{equation} \label{rateupper}
\begin{split}
&\mathcal{R}_{\mathrm{LFs}}^{\mathrm{upper}}
= N_{\mathrm{F}} B_{\mathrm{FU}} \int_{0}^{\theta_{\mathrm{max}}-\theta_{\mathrm{cap}}} \int_0^{\infty}
 f_{W_{\mathrm{U}}}(w) f_{\theta_{\mathrm{FU}}^{\mathrm{min}}}(\theta) \\
& \times \log_2\left(1 + 
 \frac{\xi_{\mathrm{FU}}w}{(r_{\mathrm{FU}}^{\mathrm{bound}}\left(\theta\right))^2}\right)
\mathrm{d}w\, \mathrm{d}\theta+\mathcal{R}_{\mathrm{LU}},
\end{split}
\end{equation}
where $r_{\mathrm{FU}}^{\mathrm{bound}}\left(\theta\right)$ is defined in (\ref{r_FU_bound}). When $\theta_{\mathrm{cap}} \ll \theta_{\max}$, a lower bound of the average data rate of the satellite cluster is given as
% \begin{equation}
% \begin{split}
% &\mathcal{R}_{\mathrm{LFs}}^{\mathrm{lower}} = N_{\mathrm{F}} B_{\mathrm{FU}} \int_{\theta_{\mathrm{cap}}}^{\theta_{\mathrm{max}}+\theta_{\mathrm{cap}}} \int_0^{\infty} f_{W_{\mathrm{U}}}(w) f_{\theta_{\mathrm{FU}}^{\mathrm{max}}}(\theta) 
%  \\
% &\times \log_2\left(1 + \frac{\rho_{\mathrm{FU}} G_{\mathrm{FU}} \zeta_{\mathrm{U}} w\left(\nu_{\mathrm{FU}}\right)^2}{\left(4\pi \sigma_{\mathrm{U}}r_{\mathrm{FU}}^{\mathrm{bound}}\left(\theta\right)\right)^2}\right) \mathrm{d}w\, \mathrm{d}\theta+\mathcal{R}_{\mathrm{LU}}.
% \end{split}
% \end{equation}
\begin{equation} \label{ratelower}
\begin{split}
&\mathcal{R}_{\mathrm{LFs}}^{\mathrm{lower}} = N_{\mathrm{F}} B_{\mathrm{FU}} \int_{\theta_{\mathrm{cap}}}^{\theta_{\mathrm{max}}+\theta_{\mathrm{cap}}} \int_0^{\infty} f_{W_{\mathrm{U}}}(w) f_{\theta_{\mathrm{FU}}^{\mathrm{max}}}(\theta) 
 \\
&\times \log_2\left(1 + \frac{\xi_{\mathrm{FU}}w}{(r_{\mathrm{FU}}^{\mathrm{bound}}\left(\theta\right))^2}\right) \mathrm{d}w\, \mathrm{d}\theta+\mathcal{R}_{\mathrm{LU}}.
\end{split}
\end{equation}
\end{corollary}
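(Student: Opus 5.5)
Starting from Theorem~\ref{theorem4}, $\mathcal{R}_{\mathrm{LFs}}$ is the sum of $\mathcal{R}_{\mathrm{LU}}$ and $N_{\mathrm{F}}$ identical relay terms $\mathbb{E}[\min\{\mathcal{R}_{\mathrm{LF},i},\mathcal{R}_{\mathrm{FU},i}\}]$. Since $\mathcal{R}_{\mathrm{LU}}$ appears unchanged in both bounds, only the relay term needs to be bounded. I will do this in two steps.

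\textbf{Step 1: remove the minimum.} The leader--follower distance $r_{\mathrm{LF}}(\psi)=R_{\mathrm{sat}}\sqrt{2(1-\cos\psi)}\le R_{\mathrm{sat}}\sqrt{2(1-\cos\theta_{\mathrm{cap}})}$ is tiny when $\theta_{\mathrm{cap}}$ is small. The follower--user distance is at least the satellite altitude, and the ISL has no rain loss and a large bandwidth $B_{\mathrm{LF}}$. So under the hypothesis $\theta_{\mathrm{cap}}\ll\theta_{\max}$, the inter-satellite rate dominates, $\mathcal{R}_{\mathrm{LF},i}\ge\mathcal{R}_{\mathrm{FU},i}$, and the minimum equals $\mathcal{R}_{\mathrm{FU},i}$. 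I will state this as the regime in which the corollary holds; it is really a modelling assumption, verified numerically with the parameters of Table~\ref{table1}. The relay term then becomes $\mathbb{E}[B_{\mathrm{FU}}\log_2(1+\xi_{\mathrm{FU}}W_{\mathrm{U}}/r_{\mathrm{FU}}^2)]$.

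\textbf{Step 2: sandwich the follower distance.} The function $g(\theta,w)=\log_2\bigl(1+\xi_{\mathrm{FU}}w/(r_{\mathrm{FU}}^{\mathrm{bound}}(\theta))^2\bigr)$ is decreasing in $\theta$, because $r_{\mathrm{FU}}^{\mathrm{bound}}$ in (\ref{r_FU_bound}) increases with $\theta$ on $[0,\pi]$. For every follower in the cap, its contact angle satisfies $\theta_{\mathrm{FU}}^{\mathrm{min}}\le\theta_{\mathrm{FU}}\le\theta_{\mathrm{FU}}^{\mathrm{max}}$ pointwise: by the spherical triangle inequality, $\max(\theta_{\mathrm{LU}}-\theta_{\mathrm{cap}},0)\le\theta_{\mathrm{FU}}\le\theta_{\mathrm{LU}}+\theta_{\mathrm{cap}}$. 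This gives
\[
g(\theta_{\mathrm{FU}}^{\mathrm{max}},W_{\mathrm{U}})\le g(\theta_{\mathrm{FU}},W_{\mathrm{U}})\le g(\theta_{\mathrm{FU}}^{\mathrm{min}},W_{\mathrm{U}}).
\]
Taking expectations, using the independence of $W_{\mathrm{U}}$ from the geometry and the densities of Lemma~\ref{lemma3}, yields the double integrals (\ref{rateupper}) and (\ref{ratelower}) after multiplying by $N_{\mathrm{F}}B_{\mathrm{FU}}$ and adding $\mathcal{R}_{\mathrm{LU}}$.

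In the upper bound, the Dirac mass of $f_{\theta_{\mathrm{FU}}^{\mathrm{min}}}$ at $\theta=0$ has to be handled. I would keep the integral starting at $0$ so that the atom contributes $g(0,w)$ times its weight. The point to check is that this is consistent with (\ref{rateupper}) and that $\theta_{\max}$ truncation is harmless: since $\theta_{\mathrm{cap}}\ll\theta_{\max}$, the tail mass beyond $\theta_{\max}-\theta_{\mathrm{cap}}$ is negligible, and dropping it only lowers a bound that is already an upper bound as long as the retained mass is correctly normalised. I would verify that the Lemma~\ref{lemma3} densities integrate to $F_{\theta_{\mathrm{LU}}}(\theta_{\max})\approx 1$.

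\textbf{Main obstacle.} The hardest part is rigour in Step 1, because the minimum cannot be removed exactly for all fading realisations. I would first try the deterministic comparison $\xi_{\mathrm{LF}}/r_{\mathrm{LF}}^2(\theta_{\mathrm{cap}})\ge\xi_{\mathrm{FU}}w/h_{\mathrm{sat}}^2$ for $w$ outside a negligible tail of the Gamma approximation. Failing that, I would accept the statement as an approximation valid in the regime $\theta_{\mathrm{cap}}\ll\theta_{\max}$, as the corollary's hypothesis suggests.
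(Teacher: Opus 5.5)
This is essentially the paper's argument. The paper gives no separate proof of Corollary~\ref{corollary2}, only the remark that for $\theta_{\mathrm{cap}}\ll\theta_{\max}$ the follower-to-user link is the bottleneck, followed by substitution of the extreme contact angles of Lemma~\ref{lemma3}; it explicitly calls the result an approximation. Your pointwise sandwich $\max(\theta_{\mathrm{LU}}-\theta_{\mathrm{cap}},0)\le\theta_{\mathrm{FU}}\le\theta_{\mathrm{LU}}+\theta_{\mathrm{cap}}$, combined with monotonicity of the log term, is the precise form of that substitution.

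Two corrections, neither of which changes where you end up.

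First, (\ref{rateupper}) needs no hypothesis at all, since $\min\{\mathcal{R}_{\mathrm{LF},i},\mathcal{R}_{\mathrm{FU},i}\}\le\mathcal{R}_{\mathrm{FU},i}$ always holds. Your truncation worry is also moot. $f_{\theta_{\mathrm{FU}}^{\mathrm{min}}}$ and $f_{\theta_{\mathrm{FU}}^{\mathrm{max}}}$ are push-forwards of the same unnormalised $f_{\theta_{\mathrm{LU}}}$ on $[0,\theta_{\max}]$ that Theorem~\ref{theorem4} integrates against, so nothing is dropped and nothing needs renormalising.

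Second, only (\ref{ratelower}) relies on the bottleneck heuristic, and your claim that it is ``verified numerically'' with Table~\ref{table1} fails at the SNR level. There $\xi_{\mathrm{LF}}/\xi_{\mathrm{FU}}\approx-18$~dB: the inter-satellite link has $10$~dB less transmit power and $10$~dB more noise, only partly offset by $2$~dB less rain loss. Your proposed sufficient condition therefore reads $\xi_{\mathrm{LF}}/r_{\mathrm{LF}}^2(\theta_{\mathrm{cap}})\approx0.076\ge\xi_{\mathrm{FU}}w/h_{\mathrm{sat}}^2\approx0.2\,w$, i.e.\ $w\lesssim0.39$. Under the Gamma approximation this covers only about $5\%$ of the fading mass (recall $\mathbb{E}[W_{\mathrm{U}}]=\Omega+2b_0\approx1.6$), so the failure set is not a negligible tail. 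Any dominance of the inter-satellite rate must come from $B_{\mathrm{LF}}\gg B_{\mathrm{FU}}$, which the paper never specifies. Moreover, because $W_{\mathrm{U}}$ is unbounded, the lower bound can never be made exact. You should therefore present (\ref{ratelower}) as an approximation, exactly as the paper does, rather than as a proved bound.
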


\par
Since condition $\theta_{\mathrm{cap}} \ll \theta_{\max}$ is assumed, the data rate is mainly limited by the follower-to-user link. Combined with applying the result of Lemma~\ref{lemma3}, the derived result in Corollary~\ref{corollary2} is therefore an approximation.

\section{Numerical Results}
This section presents the numerical results of the outage probability and average data rate. Through Monte Carlo simulations, we verify the accuracy of the analytical results derived. Furthermore, we explore the impact of different follower deployment configurations on performance and design case studies to demonstrate the improvement in average data rate with the deployment of followers. Finally, unless otherwise specified, the parameter values used in numerical simulations are taken as the default values in Table~\ref{table1}.

\begin{figure}[ht]
\centering
\includegraphics[width=0.9\linewidth]{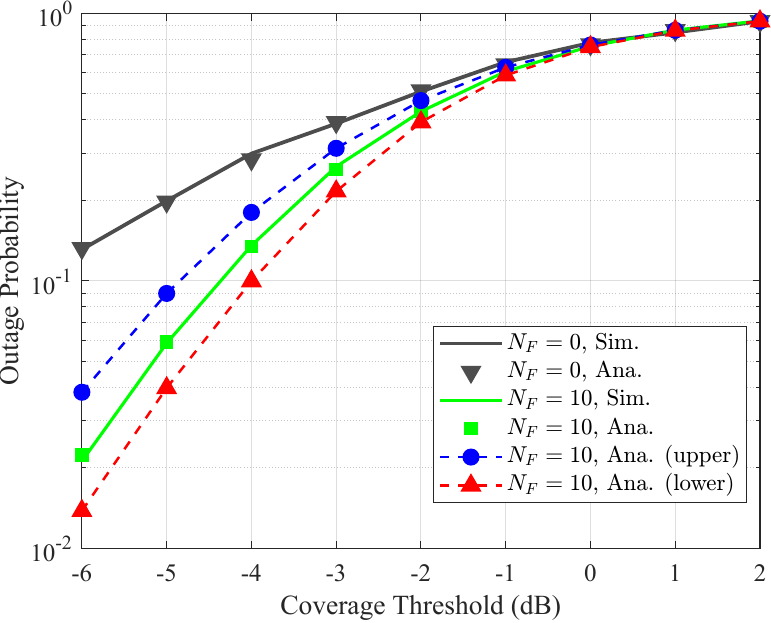}
\caption{Verification of the accuracy of outage probability.}
\label{fig:outage_vs_gamma_log}
\end{figure}

\subsection{Outage Probablity}
This subsection presents the numerical results of the outage probability, with Fig.~\ref{fig:outage_vs_gamma_log} primarily used to verify the accuracy of the derived results. In the labels of Fig.~\ref{fig:outage_vs_gamma_log}, lines represent Monte Carlo simulation results, while dots represent analytical results. The black triangular points are obtained from Theorem~\ref{theorem1}, the green squares from Theorem~\ref{theorem2}, and the other two types of dots are obtained from Corollary~\ref{corollary1}.
The overlap of the dots and lines verifies the accuracy of Theorem~\ref{theorem1} and Theorem~\ref{theorem2}. The results in Corollary~\ref{corollary1} offer a relatively tight upper and lower bound for the outage probability. As the coverage threshold increases, the gap between the upper and lower bounds and the simulation results gradually diminishes. Furthermore, Fig.~\ref{fig:outage_vs_hsat_log} also verifies the accuracy of Theorem~\ref{theorem1} and Theorem~\ref{theorem2} through the alignment of dots and lines.

\begin{figure}[ht]
\centering
\includegraphics[width=\linewidth]{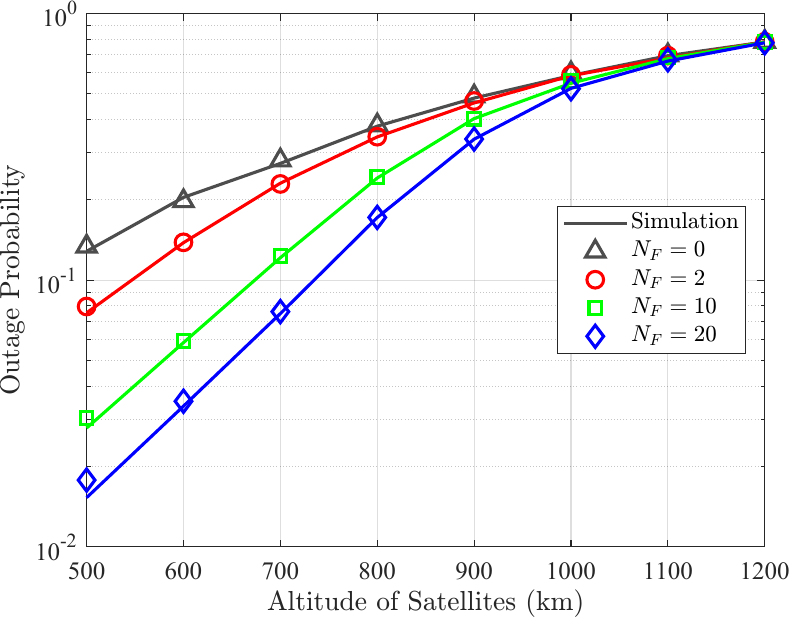}
\caption{Outage probabilities with different altitudes and numbers of follower satellites.}
\label{fig:outage_vs_hsat_log}
\end{figure}

\par
As shown in Fig.~\ref{fig:outage_vs_gamma_log}, at the default altitude and a coverage threshold of $-6$~dB, a cluster with $10$ follower satellites can reduce the outage probability to approximately one-tenth of that of a single leader satellite. In Fig.~\ref{fig:outage_vs_hsat_log}, at the default coverage threshold and an altitude of $500$~km, deploying $20$ follower satellites can reduce the outage probability to $10\%$ of its original value without followers. Furthermore, whether due to the increased path loss from raising satellite altitude or directly increasing the coverage threshold, both scenarios make it more challenging to reach the coverage threshold, consequently increasing the outage probability. An interesting phenomenon is the convergence of the curves in Fig.~\ref{fig:outage_vs_gamma_log} and Fig.~\ref{fig:outage_vs_hsat_log}. This indicates that at high altitudes and high coverage thresholds, follower satellites often experience outages due to their lower transmit power compared to leader satellites. In this case, the communication performance is primarily determined by the leader satellites, and the introduction of followers has a limited impact on communication performance.

\begin{figure}[ht]
\centering
\includegraphics[width=0.9\linewidth]{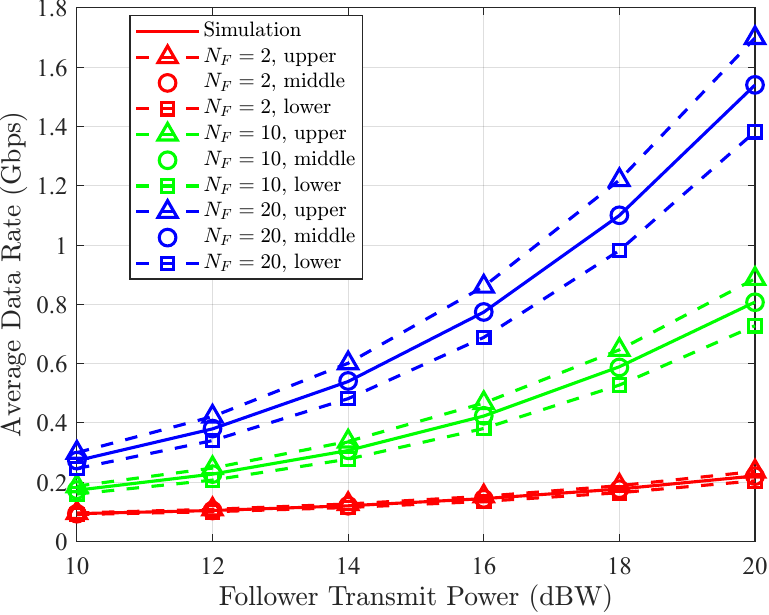}
\caption{Average data rate with different numbers of followers and follower transmit power.}
\label{fig:rate_vs_rho}
\end{figure}
% \begin{figure}[ht]  % 没画middle的
% \centering
% \includegraphics[width=\linewidth]{Fig/rate_vs_rho.png}
% \caption{.}
% \label{fig:rate_vs_rho}
% \end{figure}

\subsection{Average Data Rate}
This subsection shows the impact of follower satellites' configuration on the average data rate. Due to the high computational complexity of the expression in Theorem~\ref{theorem4}, verifying its accuracy is time-consuming. Fig.~\ref{fig:rate_vs_rho} and Fig.~\ref{fig:rate_vs_h_sat} validate that Corollary~\ref{corollary2} can provide relatively tight upper and lower bounds for the average data rate in most cases. In particular, when $N_F$ is small, the network performance is primarily determined by the leader satellite, and the analytical results can provide tight upper and lower bounds. However, as $N_F$ increases and the network performance becomes dominated by the follower satellites, there is a notable deviation between the estimated values from Corollary~\ref{corollary2} and the actual values obtained from simulations. 
% Therefore, we take the average of the upper bound and lower bound estimates as the "middle" label in Fig.~\ref{fig:rate_vs_rho} and Fig.~\ref{fig:rate_vs_h_sat} to obtain an approximate estimation. It can be observed that the values obtained from "middle" exhibit a high degree of overlap with the simulation results.
To mitigate this mismatch, we introduce an approximate estimation obtained by averaging the corresponding upper and lower bounds, which is given by
\begin{equation}\label{eq:mid_rate}
\mathcal{R}_{\mathrm{LFs}}^{\mathrm{middle}}=\frac{1}{2}(\mathcal{R}_{\mathrm{LFs}}^{\mathrm{upper}}+\mathcal{R}_{\mathrm{LFs}}^{\mathrm{lower}}).
\end{equation}
The approximation in \eqref{eq:mid_rate} is illustrated in Fig.~\ref{fig:rate_vs_rho} and Fig.~\ref{fig:rate_vs_h_sat}. It can be observed that the results obtained from \eqref{eq:mid_rate} exhibit a high degree of consistency with the simulation results, especially in the high-rate region.

\begin{figure}[ht]
\centering
\includegraphics[width=0.9\linewidth]{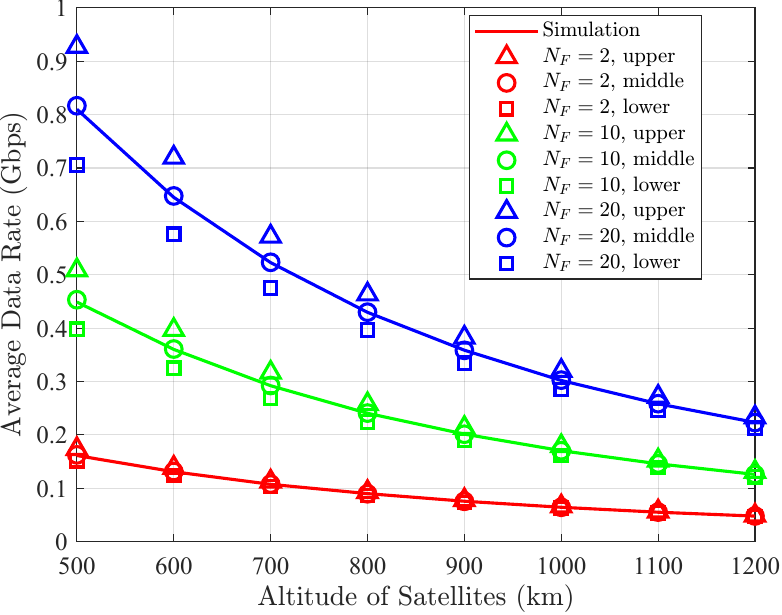}
\caption{Average data rate with different numbers of followers and altitudes.}
\label{fig:rate_vs_h_sat}
\end{figure}

\par
Similar to the convergence effect in outage probability, the improvement in average data rate by follower satellites is more pronounced when the communication quality is relatively high. As shown in Fig.~\ref{fig:rate_vs_rho} and Fig.~\ref{fig:rate_vs_h_sat}, when the follower transmit power is high and the satellite altitude is low, having a cluster with $20$ followers compared to $2$ can result in over five times increase in average data rate. A more intriguing topic is how much improvement in average data rate can be achieved by introducing a leader-follower structure compared to a single leader satellite.

\subsection{Performance Comparison}
To compare the average data rate performance between the leader-follower structure (denoted as "L.F" in the label of Fig.~\ref{fig:rate_N_F} and  Fig.~\ref{fig:rho_rho_LF}) and the non-follower structure (denoted as "N.F" in the label of Fig.~\ref{fig:rate_N_F} and Fig.~\ref{fig:rho_rho_LF}),  we designed the following case study. Considering that the average data rate is typically analyzed in real-time transmission scenarios, in the leader-follower structure, the leader satellite needs to communicate with multiple followers and the user simultaneously. This can result in a lower power allocation from the leader to users compared to a non-follower structure. Additionally, since the distance between the leader and follower satellites is much smaller than the distance between the leader and the user, only a small portion of the transmit power from the leader satellite needs to be used for relay communication with the followers, allowing most of the power to be allocated to the leader-user direct link.

\par
We denote $\rho_{\mathrm{LU}}^{(1)}$ and $\rho_{\mathrm{LU}}^{(2)}$ as the leader-user link's transmit power for non-follower and leader-follower structures, respectively. $\rho_{\mathrm{LF},i}$ is the transmit power from the leader satellite to the $i$-th follower satellite. The following equation is assumed to be satisfied:
\begin{equation}
    \rho_{\mathrm{LU}}^{(1)} = \rho_{\mathrm{LU}}^{(2)} + \sum_{i=1}^{N_F} \rho_{\mathrm{LF},i}.
\end{equation}
In the above expression, the value of $\rho_{\mathrm{LU}}^{(1)}$ is fixed. The value of $\rho_{\mathrm{LF},i}$ is set such that the average data rate from the leader to the $i$-th follower is exactly equal to the average data rate from the $i$-th follower to the user. Therefore, determining the value of $\rho_{\mathrm{LF},i}$ for all $1 \leq i \leq N_\mathrm{F}$ through simulation is also time-consuming. As the average analytical values of the upper and lower bounds provided in Corollary~\ref{corollary2} have been verified to closely match the values obtained through simulation, all numerical results in the case study are derived from this analytical expression. 

\begin{figure}[ht]
\centering
\includegraphics[width=\linewidth]{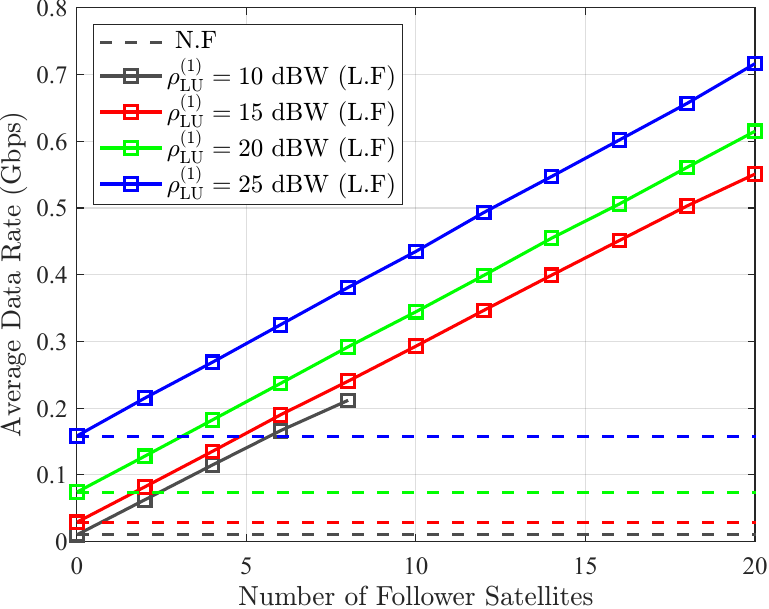}
\caption{Average data rate with different numbers of followers and leader transmit power.}
\label{fig:rate_N_F}
\end{figure}

\par
As shown in Fig.~\ref{fig:rate_N_F}, the average data rate exhibits an approximate linear growth relationship with the increasing number of follower satellites. The starting point of the line is determined by the total transmission power of the leader, while the slope is a constant value due to the fixed transmit power of the followers. In addition, a cluster with $20$ follower satellites can achieve an average data rate that is five times higher than that of a single leader satellite, which demonstrates the effectiveness of the leader-follower structure in expanding satellite channel capacity.

\begin{figure}[ht]
\centering
\includegraphics[width=\linewidth]{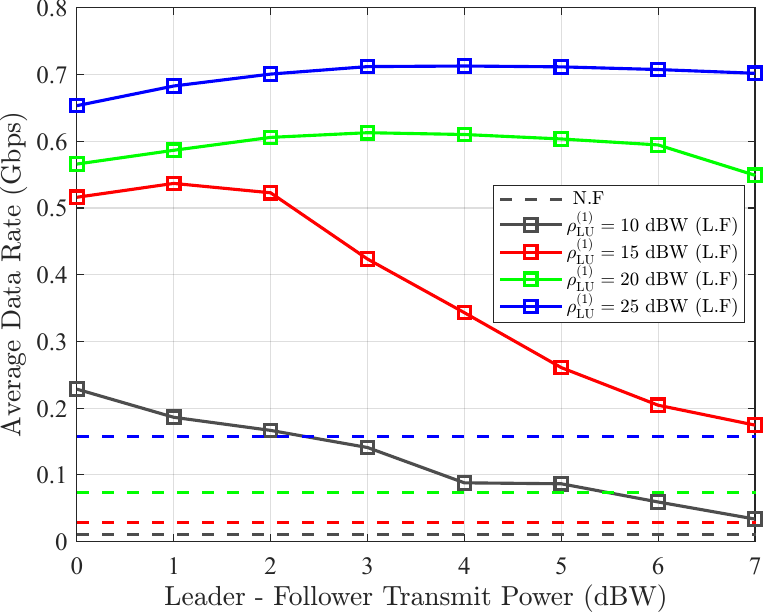}
\caption{Average data rate with different numbers of followers and leader transmit power.}
\label{fig:rho_rho_LF}
\end{figure}

\par
The x-axis of Fig.~\ref{fig:rho_rho_LF} represents the transmit power $\rho_{\mathrm{LF},i}$ allocated by the leader to a single follower satellite. It can be observed that when $\rho_{\mathrm{LU}}^{(1)} \geq 15$~dBW, there exists an optimal transmit power to maximize the average data rate. If $\rho_{\mathrm{LF},i}$ is designed as a small value, the follower satellites can not receive enough data from the leader, thus limiting the overall data rate. On the other hand, if $\rho_{\mathrm{LF},i}$ is large, it may reduce the direct communication power between the leader satellite and the user. Therefore, adjusting the power allocation strategy of the leader satellite based on factors such as the transmit power of the leader and follower satellites, the distances between them, can be a meaningful direction for future research.

\par
Finally, two points are worth mentioning. Firstly, for the black lines in Fig.~\ref{fig:rate_N_F} and Fig.~\ref{fig:rho_rho_LF}, even when all power is allocated to the inter-satellite links, the transmit power $\rho_{\mathrm{LU}}^{(1)} = 10$~dBW is insufficient. Therefore, we chose to reduce the number of followers in the cluster while still allocating transmission power at default values. This adjustment in the power allocation strategy resulted in the premature termination of the black line in Fig.~\ref{fig:rate_N_F} when assembling $8$ follower satellites, which also led to the appearance of non-smooth points in the black line in Fig.~\ref{fig:rho_rho_LF}. Secondly, our analysis of outage probability is based on scenarios with poor transmission conditions. In such cases where the probability of communication interruptions is high, followers can store information from the leader and repeatedly transmit it to users to ensure successful communication. With the offline store-and-forward transmission mechanism, the power allocation scheme was not required to be considered when evaluating the outage performance for both structures.

\section{Further Discussion}
In the previous sections, we discussed the advantages of introducing the leader-follower architecture and quantitatively demonstrated the performance improvements brought by increasing the number of follower satellites in a cluster. Beyond further elaborating on these advantages, this section investigates the potential drawbacks for large clusters as well as the associated challenges. 

\subsection{Advantages for Network Performance}
In this section, we first demonstrate, through numerical simulations, how the number of leader and follower satellites improves the network performance, and then explain the underlying reasons from the perspective of the analytical expressions.

\begin{figure}[ht]
\centering
\includegraphics[width=0.9\linewidth]{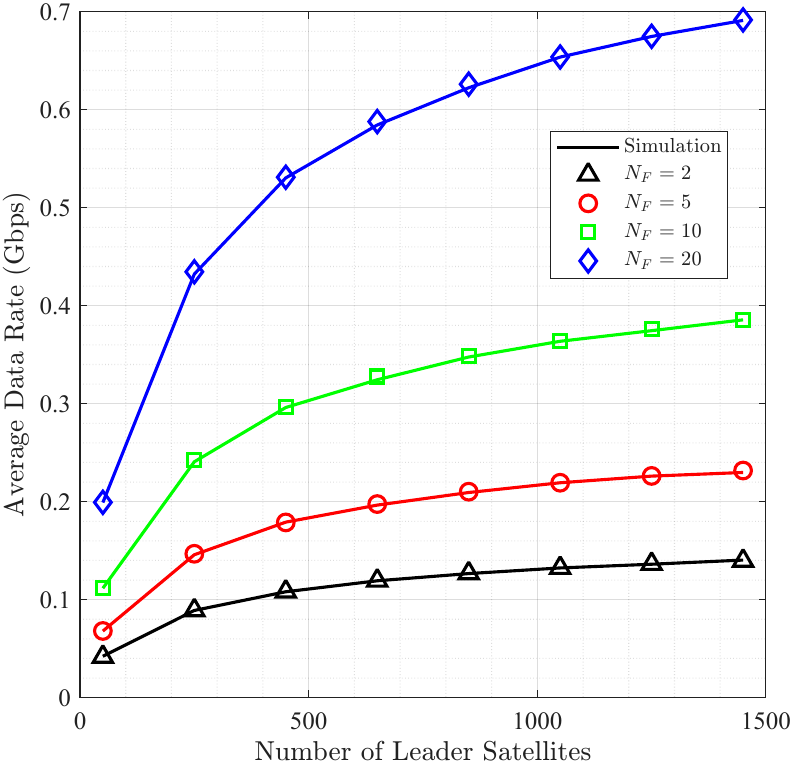}
\caption{Average data rate with different numbers of leader and follower satellites.}
\label{fig:rate_N_L}
\end{figure}

\par
As shown in Fig.~\ref{fig:rate_N_L}, increasing the number of satellites-whether follower satellites or leader satellites-can increase the average data rate. When the number of leader satellites is fixed, increasing the number of follower satellites can always effectively improve the average data rate. Conversely, when the number of followers in a cluster is fixed, once the number of leader satellites reaches a certain value, further increasing the deployment density of leader satellites yields limited performance improvement.

\par
From the analytical expressions, as the number of satellites increases, the value of the contact angle CDF increases. In other words, with a larger number of satellites, the PDF of the contact angle is higher at small contact angle values, leading to a lower outage probability and a higher average data rate. 

\subsection{Disadvantages for Network Performance}
When communication is continuous, and the data volume is large, the transmission delay required for packet delivery becomes the dominant factor determining latency. The data rate is determined by the smaller of the rates in the leader-follower link and the follower-user link. In this case, increasing the number of follower satellites leads to a continuous improvement in the network's average data rate. However, if communication is bursty and the data volume is small, the information may reach the user in a single transmission. In this scenario, transmission delay is no longer dominant and can even be shorter than the propagation delay through the atmosphere. Using a follower as a relay in such cases introduces not only additional propagation delay but also demodulation-forwarding delay and queuing delay.

\par
The second drawback of a large cluster is the significantly increased co-channel interference. While inter-satellite communication between leader and follower satellites can reduce intra-cluster interference through coordination, interference between clusters will increase substantially. Similarly, for ground users,  interference from signals within the same cluster can be easily mitigated, but the interference from other clusters will be significantly stronger. In summary, large clusters are suitable only for specific scenarios. When a single satellite handles a large volume of data traffic, introducing follower satellites to assist in transmission can effectively improve the data rate. However, in lightly loaded satellite networks, the role of followers is limited. Moreover, in already densely deployed satellite networks, adding more follower satellites may introduce additional inter-cluster interference.

\subsection{Challenges of Large Clusters}
Large clusters also introduce three major challenges. First is the higher cost, including the construction, launch, and recovery of follower satellites. In addition, the leader satellite must be equipped with more transmitters to maintain continuous communication with the followers within the cluster. Second is the increased collision risk. Some configurations require follower satellites to be concentrated within a small spherical cap, raising concerns about collisions between densely packed, fast-moving satellites, as well as collisions with space debris. Furthermore, when the orbits of two clusters intersect, avoiding collisions becomes a significant challenge. Third is the limitation caused by high centralization. Since follower satellites rely on the leader for scheduling and are dedicated solely to communication services, large clusters heavily depend on the proper operation of the leader. If the leader fails to provide communication services, the followers cannot receive instructions and therefore cannot serve the users.

\subsection{Summary}
As a result, providing a deterministic answer for the number of follower satellites is challenging. When selecting this number, designers may take the following factors into consideration:
\begin{itemize}
\item The practical data rate requirements of the network and the allowable outage probability. 
\item The intrinsic cost of the follower satellites, as well as the additional cost incurred by equipping leader satellites with multiple extra transmitters. 
\item Interference between leader and follower satellites from different clusters.
\item The risk of collisions between satellites within a cluster, between clusters, or with space debris.
\item Centralization leads to follower satellites being inoperative if the leader satellite experiences a communication failure.
\end{itemize}

\par
By jointly considering all these factors, the trade-off in the number of satellites within a cluster can be clearly characterized.

\section{Conclusion}
In this paper, we developed a low-complexity performance analysis framework for the leader-follower satellite system for the first time. The analytical results derived from this framework not only provide tight upper and lower bounds for outage probability and average data rate in most cases but also offer accurate performance estimates for the system. Numerical results indicate that a satellite cluster with multiple followers can significantly enhance the performance of the satellite network compared to a single leader satellite. In some configurations, the outage probability can be reduced to one-tenth of the original, while the average data rate can increase by more than five times the original value. The results indicate both the spatial diversity advantages of satellite clusters and the effectiveness of follower structures in extending individual satellites' service capabilities. 

\par
To compare the real-time average data rate performance with and without follower satellites, we designed a case study and introduced a power allocation mechanism for the leader satellite. Since power allocation through simulation is computationally expensive, the case study was entirely conducted using analytical results. Therefore, the case study can be regarded as an important application of the provided analytical framework. The numerical results also yield two significant conclusions. As follower satellites are configured to be fewer than leader satellites, improving communication quality by increasing the number of follower satellites relies on the premise that follower satellites can maintain high-quality and reliable communication with users. Furthermore, there exists an optimal power allocation strategy, making it crucial to determine the power allocation from the leader to different followers. Therefore, the design of allocation strategies can be considered a future research direction.

\appendices
\section{Proof of Lemma~\ref{lemma1}}\label{app:lemma1}
Based on Definition~\ref{definition2}, the CDF of the contact angle between the associated leader satellite and the user can be derived as:
\begin{equation} 
    \begin{split}
        F_{\theta_{\mathrm{LU}}}(\theta) &= 1 - \mathbb{P}\left[N\big(\mathcal{A}_{\mathrm{cap}}(\theta)\big) = 0\right],
    \end{split}
\end{equation}
where $\mathcal{A}_{\mathrm{cap}}(\theta)$ denotes the area of a spherical cap with central angle $2\theta$, and $N(\mathcal{A}_{\mathrm{cap}}(\theta))$ counts the number of leader satellites located within this cap region.

\par
Based on the homogeneous BPP assumption for satellite distribution, the probability that no satellite is located within the cap is
\begin{equation}
\begin{split}
& \mathbb{P}\left[N\big(\mathcal{A}_{\mathrm{cap}}(\theta)\big) = 0\right] \\
&= \left( 1 - \frac{2\pi R_{\mathrm{sat}}^{2}(1 - \cos\theta)}{4\pi R_{\mathrm{sat}}^{2}} \right)^{N_{\mathrm{L}}} \\
&= \left( \frac{1 + \cos\theta}{2} \right)^{N_{\mathrm{L}}}.
    \end{split}
\end{equation}
Substituting it into the CDF expression yields
\begin{equation}\label{app:F_LU}
    F_{\theta_{\mathrm{LU}}}(\theta) = 1 - \left( \frac{1 + \cos\theta}{2} \right)^{N_{\mathrm{L}}}.
\end{equation}

\par
Taking the derivative of the CDF with respect to $\theta$ leads to the PDF:
\begin{equation}
    f_{\theta_{\mathrm{LU}}}(\theta) = \frac{N_{\mathrm{L}} \sin\theta}{2}  \left( \frac{1 + \cos\theta}{2} \right)^{N_{\mathrm{L}} - 1}.
\end{equation}

\par
Regarding the range of the contact angle, the minimum value of $\theta_{\mathrm{LU}}$ is achieved when a leader satellite is located directly above the user, while the maximum value corresponds to the case where the leader lies at the farthest allowable distance from the user, denoted by $\theta_{\mathrm{max}}$.

\section{Proof of Lemma~\ref{lemma2}}\label{app:lemma2}
For the convenience of subsequent derivations, we rotate the spherical coordinate system and set the coordinates of the associated leader satellite as located at $(R_{\mathrm{sat}}, 0, 0)$. In this case, the user is located at $(R_{\oplus}, \theta_{\mathrm{LU}}, 0)$. The follower satellites are located on a spherical cap with the $z$-axis as the rotation axis. Next, we discuss two cases based on the value of $\theta_{\mathrm{LU}}$.

\par
\begin{figure}[ht]
\centering
\includegraphics[width=\linewidth]{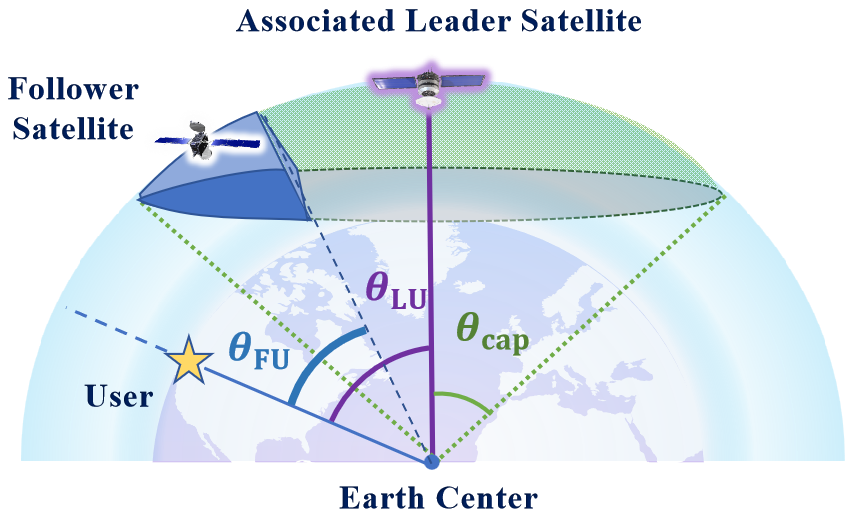}
\caption{Geometric illustration for $\theta_{\mathrm{FU}}$ with the user outside the spherical cap.}
\label{figure2}
\end{figure}
In the first case, $\theta_{\mathrm{LU}} \in [\theta_{\mathrm{cap}},\theta_{\mathrm{max}}]$ is satisfied, and the user is located outside the spherical cap where the followers are distributed. Since the followers follow a BPP within the cap, the CDF of $\theta_{\mathrm{FU}}$ can be expressed as
\begin{equation}
F_{\theta_{\mathrm{FU}}}\left(\theta\right)=\frac{S_{\mathrm{shadow}}}{S_{\mathrm{cap}}},
\end{equation}
where $S_{\mathrm{shadow}}$ denotes the surface area of the dark blue shaded region shown in Fig.~\ref{figure2}. $S_{\mathrm{cap}} = 2\pi R_{\mathrm{sat}}^{2}(1 - \cos\theta_{\mathrm{cap}})$ is the surface area of the full spherical cap.

\par
The surface area of an intersected portion on one side of a spherical cap with central angle $\theta_{\mathrm{c}}$, and intercepted by angle $\theta_{\mathrm{o}}$, is given by \cite{wang2022conditional}:
\begin{equation}
\begin{split}
    S(\theta_{\mathrm{c}}, \theta_{\mathrm{o}}) &= \int_{R_{\mathrm{sat}} \cos \theta_{\mathrm{c}} \tan \left( \theta_{\mathrm{c}} - \theta_{\mathrm{o}} \right)}^{R_{\mathrm{sat}} \sin \theta_{\mathrm{c}}} 2 R_{\mathrm{sat}} \\
    &\quad \times \arcsin \left( \frac{\sqrt{R_{\mathrm{sat}}^2 \sin^2 \theta_{\mathrm{c}} - l^2}}{R_{\mathrm{sat}}} \right) \, dl.
\end{split}
\end{equation}
Based on the angular relationship illustrated in the figure, the dark blue shaded region corresponds to the intersected portion of the spherical cap, with the corresponding angular extent given by $\theta_{\mathrm{cap}} - \theta_{\mathrm{LU}} + \theta$. Thus, the corresponding CDF becomes
\begin{equation}
    F_{\theta_{\mathrm{FU}}}(\theta) = \frac{S\left(\theta_{\mathrm{cap}},\theta_{\mathrm{cap}}-\theta_{\mathrm{LU}}+\theta\right)}{2\pi R_{\mathrm{sat}}^{2}(1 - \cos\theta_{\mathrm{cap}})}.
\end{equation}

The PDF of $\theta_{\mathrm{FU}}$ can be obtained by differentiating its CDF with respect to $\theta$. To facilitate the derivation, we compute the derivative of the function $S(\theta_{\mathrm{c}}, \theta_{\mathrm{o}})$ with respect to $\theta_{\mathrm{o}}$. Since the upper limit of the integral is independent of $\theta_{\mathrm{o}}$, and the integrand does not contain $\theta_{\mathrm{o}}$ explicitly, the derivative can be computed using the Leibniz integral rule:
\begin{equation}
    \frac{\partial S(\theta_{\mathrm{c}}, \theta_{\mathrm{o}})}{\partial \theta_{\mathrm{o}}}
    = - f(l_0, \theta_{\mathrm{c}}) \cdot \frac{\partial l_0}{\partial \theta_{\mathrm{o}}},
\end{equation}
where the lower limit $l_0$ is given by
\begin{equation}
    l_0 = R_{\mathrm{sat}} \cos \theta_{\mathrm{c}} \tan(\theta_{\mathrm{c}} - \theta_{\mathrm{o}}).
\end{equation}
Differentiating $l_0$ with respect to $\theta_{\mathrm{o}}$ yields:
\begin{equation}
    \frac{\partial l_0}{\partial \theta_{\mathrm{o}}} 
    = - R_{\mathrm{sat}} \cos \theta_{\mathrm{c}} \cdot \sec^2(\theta_{\mathrm{c}} - \theta_{\mathrm{o}}).
\end{equation}
Evaluating the integrand at the lower limit $l = l_0$, we obtain:
\begin{equation}
\begin{aligned}
    f(l_0, \theta_{\mathrm{c}}) &= 2 R_{\mathrm{sat}} \arcsin \left( 
    \sqrt{ \sin^2 \theta_{\mathrm{c}} - \left( \frac{l_0}{R_{\mathrm{sat}}} \right)^2 } \right) \\
    &= 2 R_{\mathrm{sat}} \arcsin \left( 
    \sqrt{ \sin^2 \theta_{\mathrm{c}} - \cos^2 \theta_{\mathrm{c}} \tan^2(\theta_{\mathrm{c}} - \theta_{\mathrm{o}}) } \right).
\end{aligned}
\end{equation}
Substituting the above expressions into the Leibniz formula, we obtain the final derivative:
\begin{equation}
\begin{aligned}
    \frac{\partial S(\theta_{\mathrm{c}}, \theta_{\mathrm{o}})}{\partial \theta_{\mathrm{o}}}
    &= 2 R_{\mathrm{sat}}^2 \cos \theta_{\mathrm{c}} \sec^2(\theta_{\mathrm{c}} - \theta_{\mathrm{o}}) \\
    &\quad \times \arcsin \left( \sqrt{ \sin^2 \theta_{\mathrm{c}} - \cos^2 \theta_{\mathrm{c}} \tan^2(\theta_{\mathrm{c}} - \theta_{\mathrm{o}}) } \right).
\end{aligned}
\end{equation}
Finally, by applying the derivative of the function $S(\theta_{\mathrm{c}}, \theta_{\mathrm{o}})$, the PDF can be obtained by differentiating the CDF as follows:
\begin{equation}
\begin{split}  
    & f_{\theta_{\mathrm{FU}}}(\theta) = \frac{ \cos \theta_{\mathrm{cap}} \sec^2(\theta_{\mathrm{LU}} - \theta) }{\pi (1 - \cos\theta_{\mathrm{cap}})}\\
     &\times\arcsin \left( \sqrt{ \sin^2 \theta_{\mathrm{cap}} - \cos^2 \theta_{\mathrm{cap}} \tan^2(\theta_{\mathrm{LU}} - \theta) } \right).
\end{split}
\end{equation}

\par
\begin{figure*}[ht]
\centering
\includegraphics[width=\linewidth]{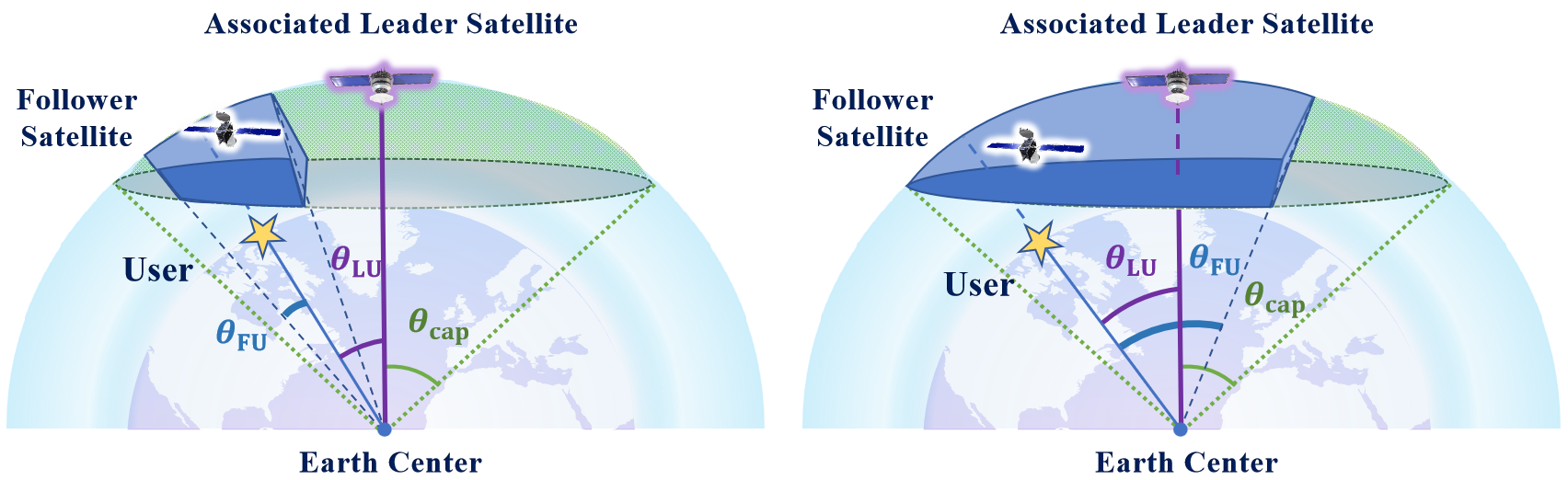}
\caption{Geometric illustration for $\theta_{\mathrm{FU}}$ with the user inside the spherical cap.}
\label{figure3}
\end{figure*}
In the second case, $\theta_{\mathrm{LU}} \in [0,\theta_{\mathrm{cap}})$ is satisfied, and the user is located beneath the spherical cap. The CDF of $\theta_{\mathrm{FU}}$ is again equal to the probability that a follower lies in the dark blue shaded region, and its expression becomes piecewise. 

When $\theta_{\mathrm{FU}}\in[0,\theta_{\mathrm{cap}}-\theta_{\mathrm{LU}}]$, the dark blue shaded region can be computed as the difference between two spherical cap segments, as illustrated in the left part of Fig.~\ref{figure3}. These segments are defined by the angular distances from the edge of the cap to the left and right boundaries of the shaded region, respectively. The corresponding area can thus be calculated as follows:
\begin{equation}
\begin{split}
S_{\mathrm{shadow}}=
S\left(\theta_{\mathrm{cap}},\theta_{\mathrm{cap}}-\theta_{\mathrm{LU}}+\theta\right)\\
-S\left(\theta_{\mathrm{cap}},\theta_{\mathrm{cap}}-\theta_{\mathrm{LU}}-\theta\right).
\end{split}
\end{equation}

When $\theta_{\mathrm{FU}} \in [\theta_{\mathrm{cap}} - \theta_{\mathrm{LU}}, \theta_{\mathrm{cap}} + \theta_{\mathrm{LU}}]$, the corresponding dark blue shaded region is illustrated in the right part of Fig.~\ref{figure3}. The shaded area remains geometrically equivalent to the first case.

As a result, the CDF of $\theta_{\mathrm{FU}}$ is given as
\begin{equation}
\begin{split}
& F_{\theta_{\mathrm{FU}}}(\theta) = \\
&\begin{cases}
\dfrac{S\left(\theta_{\mathrm{cap}},\theta_{\mathrm{cap}}-\theta_{\mathrm{LU}}+\theta\right)-S\left(\theta_{\mathrm{cap}},\theta_{\mathrm{cap}}-\theta_{\mathrm{LU}}-\theta\right)}{2\pi R_{\mathrm{sat}}^{2}(1 - \cos\theta_{\mathrm{cap}})},\\
\quad \quad \quad \quad \quad \quad \quad \quad \quad \quad \quad \quad \text{for } \theta\in[0,\theta_{\mathrm{cap}}-\theta_{\mathrm{LU}}], \\
\dfrac{S\left(\theta_{\mathrm{cap}},\theta_{\mathrm{cap}}-\theta_{\mathrm{LU}}+\theta\right)}{2\pi R_{\mathrm{sat}}^{2}(1 - \cos\theta_{\mathrm{cap}})}, \\
\quad \quad \quad \quad \quad \quad \quad \quad \text{for } \theta\in[\theta_{\mathrm{cap}}-\theta_{\mathrm{LU}},\theta_{\mathrm{cap}}+\theta_{\mathrm{LU}}],
\end{cases}
\end{split}
\end{equation}
and the corresponding PDF can be obtained by differentiating the piecewise CDF with respect to $\theta$.

\section{Proof of Lemma~\ref{lemma3}}\label{app:lemma3}
In this appendix, we first derive the CDF and PDF of the distributions of minimum contact angle $\theta_{\mathrm{FU}}^{\mathrm{min}}$. Following the same coordinate system transformation approach as in Appendix~\ref{app:lemma2}, we consider two distinct cases: (i) the user is located beneath the spherical cap ($\theta_{\mathrm{LU}}\leq\theta_{\mathrm{cap}}$) where follower satellites are distributed and (ii) the user is not ($\theta_{\mathrm{LU}}>\theta_{\mathrm{cap}}$).

\par
For the first case, the minimum contact angle is considered as the condition that the nearest follower is located directly above the user, that is,  $\theta_{\mathrm{FU}}^{\mathrm{min}}=0$. Consequently, the probability mass at $\theta_{\mathrm{FU}}^{\mathrm{min}}=0$ is equal to the value  $F_{\theta_{\mathrm{LU}}}(\theta_{\mathrm{cap}})$. Therefore, the PDF of $f_{\theta_{\mathrm{LU}}}(\theta_{\mathrm{cap}})$ can be written as
\begin{equation}
    f_{\theta_{\mathrm{FU}}^{\mathrm{min}}}(\theta) = \left(1 - \left( \dfrac{1 + \cos(\theta_{\mathrm{cap}})}{2} \right)^{N_{\mathrm{L}}}\right)\cdot \delta(\theta),
\end{equation}
when $\theta = 0$. $\delta(\cdot)$ denotes the Dirac delta function, which accounts for the discrete probability mass at $\theta = 0$.

\par
Then, maximum contact angle corresponds to the edge of the spherical cap, yielding $\theta_{\mathrm{FU}}^{\mathrm{min}}=\theta_{\mathrm{LU}}-\theta_{\mathrm{cap}}$. The CDF is derived as follows:
\begin{equation}
\begin{split}
    F_{\theta_{\mathrm{FU}}^{\mathrm{min}}}(\theta)
    &= \mathbb{P}(\theta_{\mathrm{FU}}^{\mathrm{min}} \leq \theta)\\
    &= \mathbb{P}(\theta_{\mathrm{LU}} \leq \theta + \theta_{\mathrm{cap}}) \\
    &= F_{\theta_{\mathrm{LU}}}(\theta + \theta_{\mathrm{cap}}).
    \end{split}
\end{equation}
Substituting from (\ref{app:F_LU}), we obtain the final CDF expression:
\begin{equation}
    F_{\theta_{\mathrm{FU}}^{\mathrm{min}}}(\theta)= 1 - \left( \frac{1 + \cos(\theta + \theta_{\mathrm{cap}})}{2} \right)^{N_{\mathrm{L}}}.
\end{equation}

\par
The PDF is obtained by differentiating the CDF with respect to $\theta$, yielding:
\begin{equation}
\begin{split}
f_{\theta_{\mathrm{FU}}^{\mathrm{min}}}(\theta) = \frac{N_{\mathrm{L}} \sin(\theta + \theta_{\mathrm{cap}})}{2} \left( \dfrac{1 + \cos(\theta + \theta_{\mathrm{cap}})}{2} \right)^{N_{\mathrm{L}} - 1}. 
\end{split}
\end{equation}

\par
Next, we derive the CDF and the PDF of the distributions of maximum contact angle $\theta_{\mathrm{FU}}^{\max}$. It can be observed that the farthest follower lies at the boundary of the spherical cap in the direction opposite to the user, resulting in the maximum contact angle
\begin{equation}
    \theta_{\mathrm{FU}}^{\mathrm{max}} = \theta_{\mathrm{LU}} + \theta_{\mathrm{cap}}.
\end{equation}
Accordingly, the CDF of $\theta_{\mathrm{FU}}^{\mathrm{max}}$ is given by
\begin{equation}
\begin{split}
    F_{\theta_{\mathrm{FU}}^{\mathrm{max}}}(\theta)
    &= \mathbb{P}(\theta_{\mathrm{FU}}^{\mathrm{max}} \leq \theta) \\
    &= \mathbb{P}(\theta_{\mathrm{LU}} \leq \theta - \theta_{\mathrm{cap}}) \\
    &= F_{\theta_{\mathrm{LU}}}(\theta - \theta_{\mathrm{cap}}).
\end{split}
\end{equation}

\par
Following a similar approach as in the previous derivation, the PDF of $\theta_{\mathrm{FU}}^{\mathrm{max}}$ can be obtained by first substituting the CDF expression and then differentiating it with respect to $\theta$, yielding
\begin{equation}
    f_{\theta_{\mathrm{FU}}^{\mathrm{max}}}(\theta) =
    \dfrac{N_{\mathrm{L}} \sin(\theta - \theta_{\mathrm{cap}})}{2} 
    \left( \dfrac{1 + \cos(\theta - \theta_{\mathrm{cap}})}{2} \right)^{N_{\mathrm{L}} - 1}.
\end{equation}
Since $\theta_{\mathrm{LU}} \in [0, \theta_{\max}]$, the maximum contact angle $\theta_{\mathrm{FU}}^{\mathrm{max}}$ is bounded within $\left[\theta_{\mathrm{cap}}, \theta_{\max} + \theta_{\mathrm{cap}} \right]$.

\section{Proof of Theorem~\ref{theorem1}}\label{app:theorem1}
As defined in Definition~\ref{definition3}, the outage probability of the associated leader satellite to the user link can be given as 
\begin{equation}\label{P_out_LU}
\begin{split}
& P_{\mathrm{out}}^{\mathrm{LU}} = \mathbb{P} \{ \mathrm{SNR_{\mathrm{LU}}}< \gamma_{_{\mathrm{th}}} \} \\
&=\mathbb{E}_{r_{\mathrm{LU}}}\left[\mathbb{P}\left(\rho_{\mathrm{LU}}G_{\mathrm{LU}}\zeta_{\mathrm{U}}W_{\mathrm{U}}\left(\frac{\nu_{\mathrm{LU}}}{4\pi\sigma_{\mathrm{U}}r_{\mathrm{LU}}}\right)^2<\gamma_{\mathrm{th}}\right)\right]\\
&=\mathbb{E}_{r_{\mathrm{LU}}}\left[\mathbb{P}\left(W_{\mathrm{U}}<\frac{\gamma_{\mathrm{th}}}{\rho_{\mathrm{LU}}G_{\mathrm{LU}}\zeta_{\mathrm{U}}} \left(\frac{4\pi\sigma_{\mathrm{U}}r_{\mathrm{LU}}}{\nu_{\mathrm{LU}}}\right)^2\right)\right],
    \end{split}
\end{equation}
where $r_{\mathrm{LU}}$ is the Euclidean distance between the leader satellite and the user,
\begin{equation}\label{app_r_LU}
    r_{\mathrm{LU}}(\theta_{\mathrm{LU}}) = \sqrt{R_{\mathrm{sat}}^2 + R_{\oplus}^2 - 2 R_{\mathrm{sat}} R_{\oplus} \cos\theta_{\mathrm{LU}}}.
\end{equation}
By substituting the expression of $r_{\mathrm{LU}}(\theta_{\mathrm{LU}})$ and the PDF of $\theta_{\mathrm{LU}}$ into the expectation, the outage probability can be expressed as
\begin{sequation}
    \begin{split}
    P_{\mathrm{out}}^{\mathrm{LU}} = \int_{0}^{\theta_{\mathrm{max}}} \mathbb{P} \left(W_{\mathrm{U}}<\frac{\gamma_{\mathrm{th}}}{\rho_{\mathrm{LU}}G_{\mathrm{LU}}\zeta_{\mathrm{U}}} \left(\frac{4\pi\sigma_{\mathrm{U}}r_{\mathrm{LU}}(\theta)}{\nu_{\mathrm{LU}}}\right)^2\right) \\
    \times f_{\theta_{\mathrm{LU}}}(\theta) \, \mathrm{d}\theta \\
    = \int_{0}^{\theta_{\max}} 
    F_{W_{\mathrm{U}}} \left( \frac{\gamma_{\mathrm{th}}}{\rho_{\mathrm{LU}}G_{\mathrm{LU}}\zeta_{\mathrm{U}}}\left(\frac{4\pi\sigma_{\mathrm{U}}r_{\mathrm{LU}}(\theta)}{\nu_{\mathrm{LU}}}\right)^2 \right) \\
    \times f_{\theta_{\mathrm{LU}}}(\theta) \, \mathrm{d} \theta.
    \end{split}
\end{sequation}

% \begin{equation}
%     \begin{split}
%         P_{\mathrm{out}}^{\mathrm{LFs}}= \int_{0}^{\theta_{\max}} F_{W_s} \left( \frac{\gamma_{\mathrm{th}}}{C} \cdot \left( r_{\mathrm{FU}}(\theta_{\mathrm{FU}}) \right)^2 \right) f_{\theta_{\mathrm{FU}}}(\theta_{\mathrm{FU}}) d\theta_{\mathrm{FU}},
%     \end{split}
% \end{equation}
% where  $r_{\mathrm{FU}}(\theta_{\mathrm{FU}})$ denotes the distance between the typical user and the nearest follower satellite, which can be expressed as
% \begin{equation}\label{r_LU}
%     r_{\mathrm{FU}}(\theta_{\mathrm{FU}}) = \sqrt{R_{\mathrm{sat}}^2 + R_{\oplus}^2 - 2 R_{\mathrm{sat}} R_{\oplus} \cos(\theta_{FU})}.
% \end{equation}.

% \begin{equation}
%     P_{\mathrm{out}}^{\mathrm{LFs}} = \mathbb{E}_{\theta_{\mathrm{LU}}}\left[\left(P_{\mathrm{out},i}^{\mathrm{FU}}\left(\theta_{\mathrm{LU}}\right) \right)^{N_{\mathrm{F}}}\right],
% \end{equation}

% \begin{equation}\label{P_out_fu_i}
%     \begin{split}
%         P_{\mathrm{out},i}^{\mathrm{FU}} 
%     =& \int_{0}^{2\pi} \int_{\theta}^{\theta_{\mathrm{cap}}} \int_{0}^{\theta_{\mathrm{max}}} 
%     F_{W_s}\left( 
%         \frac{ \gamma_{th} \sigma^2 }{ \rho_{\mathrm{sat}} G \zeta \nu^2 } 
%         \cdot \left( 4\pi r_{\mathrm{FU}} \right)^2 
%     \right) \notag \\
%     &\times f_{\theta_{\mathrm{LU}}}(\theta_{\mathrm{LU}})
%     f_{\theta_\mathrm{F}}(\theta)
%     f_{\phi_{\mathrm{F}}}(\phi)
%     \, d\theta_{\mathrm{LU}} d\theta d\phi.
%     \end{split}
% \end{equation}

\section{Proof of Theorem~\ref{theorem2}} \label{app:theorem2}
The outage probability of a satellite cluster equipped with follower satellites can be derived by first analyzing the outage probability between a follower satellite and the user. Given the central angle $\theta_{\mathrm{LU}}$, the conditional outage probability between the user and the $i$-th follower satellite can be expressed as:
\begin{sequation}
P_{\mathrm{out},i}^{\mathrm{FU}} = 
\mathbb{E}_{r_{\mathrm{FU}}}\left[\mathbb{P}\left(W_{\mathrm{U}}<\frac{\gamma_{\mathrm{th}}}{\rho_{\mathrm{FU}}G_{\mathrm{FU}}\zeta_{\mathrm{U}}} \left(\frac{4\pi\sigma_{\mathrm{U}}r_{\mathrm{FU}}}{\nu_{\mathrm{FU}}}\right)^2\right)\right],
\end{sequation}
where $r_{\mathrm{FU}}$ denotes the Euclidean distance between the follower satellite and the user:
\begin{equation}\label{app:r_FU}
\begin{aligned}
& r_{\mathrm{FU}}\left(\theta_{\mathrm{LU}}, \psi, \varphi\right) = \Big( R_{\oplus}^2 + R_{\mathrm{sat}}^2 
- 2 R_{\oplus} R_{\mathrm{sat}} \\
& \times \big( 
\sin\theta_{\mathrm{LU}} \sin\psi \cos\varphi + \cos\theta_{\mathrm{LU}} \cos\psi 
\big) \Big)^{1/2}.
\end{aligned}
\end{equation}

\par
Since the follower satellites are distributed according to a BPP over a spherical cap with angular radius $\theta_{\mathrm{cap}}$, the corresponding PDFs of the polar and azimuth angles are given by:
\begin{align}
f_{\psi}(\psi) &= \frac{\sin\psi}{1 - \cos\theta_{\mathrm{cap}}}, \quad 0 \leq \psi \leq \theta_{\mathrm{cap}}, \\
f_{\varphi}(\varphi) &= \frac{1}{2\pi}, \quad 0 \leq \varphi \leq 2\pi.
\end{align}
For a fixed $\theta_{\mathrm{LU}}$, the average outage probability can be obtained by taking the expectation over $\psi$ and $\varphi$:
\begin{sequation}
\begin{split}
& P_{\mathrm{out},i}^{\mathrm{FU}} \left(\theta_{\mathrm{LU}}\right) = \int_{0}^{2\pi} \int_{0}^{\theta_{\mathrm{cap}}} \frac{\sin \psi}{2\pi\left(1 - \cos\theta_{\mathrm{cap}}\right)} \\
& \times F_{W_{\mathrm{U}}} \left( \frac{\gamma_{\mathrm{th}}}{\rho_{\mathrm{FU}}G_{\mathrm{FU}}\zeta_{\mathrm{U}}} \left(\frac{4\pi\sigma_{\mathrm{U}}r_{\mathrm{FU}} \left(\theta, \psi, \varphi \right)}{\nu_{\mathrm{FU}}}\right)^2 \right) \mathrm{d}\psi \mathrm{d}\varphi.
\end{split}
\end{sequation}

\par
Since the fading and positions of follower-user links and the leader-user link are assumed to be independent, the outage events for different satellites are mutually independent. Therefore, the overall outage probability with a given $\theta_{\mathrm{LU}}$ can be expressed as:
\begin{equation}
P_{\mathrm{out,cond}}^{\mathrm{Cluster}}\left(\theta_{\mathrm{LU}}\right) =\left( P_{\mathrm{out},i}^{\mathrm{FU}} \left(\theta_{\mathrm{LU}}\right)\right)^{N_{\mathrm{F}}} \cdot P_{\mathrm{Cond}}^{\mathrm{LU}}\left(\theta_{\mathrm{LU}}\right),
\end{equation}
where $P_{\mathrm{out}}^{\mathrm{Cond}}\left(\theta_{\mathrm{LU}}\right)$ is the outage probability for the associated leader satellite conditioned on $\theta_{\mathrm{LU}}$. Its derivation follows a similar approach to that in equation (\ref{P_out_LU}),
\begin{equation}
\begin{split}
& P_{\mathrm{Cond}}^{\mathrm{LU}}\left(\theta_{\mathrm{LU}} \right) \\
& = F_{W_{\mathrm{U}}} \left( \frac{\gamma_{\mathrm{th}}}{\rho_{\mathrm{LU}}G_{\mathrm{LU}}\zeta_{\mathrm{U}}} \left(\frac{4\pi\sigma_{\mathrm{U}}r_{\mathrm{LU}}(\theta_{\mathrm{LU}})}{\nu_{\mathrm{LU}}}\right)^2 \right).
    \end{split}
\end{equation}

\par
The overall unconditional outage probability for the satellite cluster is obtained by traversing $\theta_{\mathrm{LU}}$:
\begin{equation}
P_{\mathrm{out}}^{\mathrm{Cluster}} =\int_{0}^{\theta_{\mathrm{max}}} \left(P_{\mathrm{out},i}^{\mathrm{FU}}\left(\theta\right) \right)^{N_{\mathrm{F}}}P_{\mathrm{Cond}}^{\mathrm{LU}}\left(\theta\right) f_{\theta_{\mathrm{LU}}}(\theta)  \mathrm{d} \theta.
\end{equation}

% \begin{equation}
%     \begin{split}
%         P_{\mathrm{out}}^{\mathrm{min}}= \int_{0}^{\theta_{\max}} F_{W_s} \left( \frac{\gamma_{\mathrm{th}}}{C} \cdot \left( r_{\mathrm{FU}}(\theta_{\mathrm{FU}}^{\mathrm{min}}) \right)^2 \right) f_{\theta_{\mathrm{FU}}^{\mathrm{min}}}(\theta) d\theta,
%     \end{split}
% \end{equation}
% \begin{equation}
%     \begin{split}
%         P_{\mathrm{out}}^{\mathrm{max}}= \int_{0}^{\theta_{\max}} F_{W_s} \left( \frac{\gamma_{\mathrm{th}}}{C} \cdot \left( r_{\mathrm{FU}}(\theta_{\mathrm{FU}}^{\mathrm{max}}) \right)^2 \right) f_{\theta_{\mathrm{FU}}^{\mathrm{max}}}(\theta) d\theta,
%     \end{split}
% \end{equation}
% where $f_{\theta_{\mathrm{FU}}^{\mathrm{max}}}(\theta)$ and $f_{\theta_{\mathrm{FU}}^{\mathrm{max}}}(\theta)$ are given in Lemma~\ref{lemma2}.

\section{Proof of Theorem~\ref{theorem3}}\label{app:theorem3}
Based on  Definition~\ref{definition4}, the average data rate of the leader satellite can be computed as
\begin{equation}
\begin{split}
    & \mathcal{R}_{\mathrm{LU}} = \mathbb{E}_{r_{\mathrm{LU}}, W_{\mathrm{U}}} \Big[ B_{\mathrm{LU}} \\
    & \times \log_2\left(1 + \rho_{\mathrm{LU}} G_{\mathrm{LU}} \zeta_{\mathrm{U}} W_{\mathrm{U}} \left( \frac{\nu_{\mathrm{LU}}}{4\pi \sigma_{\mathrm{U}}r_{\mathrm{LU}}} \right)^2  \right) \Bigg],
\end{split}
\end{equation}
where $r_{\mathrm{LU}}$ is defined in (\ref{app_r_LU}).

\par
By taking the expectation over the distribution of $(\theta_{\mathrm{LU}}$ and $W_{\mathrm{U}}$, the average rate is rewritten as
\begin{equation}\label{app:R_LU}
\begin{split}
& \mathcal{R}_{\mathrm{LU}} = \int_{0}^{\theta_{\mathrm{max}}} \int_0^{\infty} B_{\mathrm{LU}} f_{W_{\mathrm{U}}}(w) f_{\theta_{\mathrm{LU}}}(\theta) \\
& \times \log_2\left(1 + \rho_{\mathrm{LU}} G_{\mathrm{LU}} \zeta_{\mathrm{U}} w \left( \frac{\nu_{\mathrm{LU}}}{4\pi \sigma_{\mathrm{U}}r_{\mathrm{LU}}(\theta)} \right)^2  \right) \mathrm{d}w \mathrm{d}\theta.
\end{split}
\end{equation}

\section{Proof of Theorem~\ref{theorem4}}\label{app:theorem4}
In this appendix, we start with the data rate of the leader to the $i$-th follower link
\begin{equation}
\begin{split}
& \mathcal{R}_{\mathrm{LF},i}
= B_{\mathrm{LF}} \\
& \times \log_2\left(1 + \rho_{\mathrm{LF}} G_{\mathrm{LF}} \zeta_{\mathrm{F}} W_{\mathrm{F}} \left( \frac{\nu_{\mathrm{LF}}}{4\pi \sigma_{\mathrm{F}}r_{\mathrm{LF}}} \right)^2\right),
\end{split}
\end{equation}
where $r_{\mathrm{LF}}$ denotes the Euclidean distance between the leader satellite and the $i$-th follower,
\begin{equation}
r_{\mathrm{LF}}\left(\psi\right) = R_{\mathrm{sat}} \sqrt{2 (1 - \cos \psi)}.
\end{equation}

\par
Similarly, the data rate from the $i$-th follower satellite to the user is given by
\begin{equation}
\begin{split}
& \mathcal{R}_{\mathrm{FU},i} = B_{\mathrm{FU}} \\
& \times \log_2\left(1 + \rho_{\mathrm{FU}} G_{\mathrm{FU}} \zeta_{\mathrm{U}} W_{\mathrm{U}} \left( \frac{\nu_{\mathrm{FU}}}{4\pi \sigma_{\mathrm{U}}r_{\mathrm{FU}}} \right)^2 \right),
\end{split}
\end{equation}
where $r_{\mathrm{FU}}$ is given in (\ref{app:r_FU}).

\par
Since the links are statistically identical and independent, the summation in~\eqref{def:R_c} can be rewritten as the product of the number of follower satellites and the expected value of the minimum rate per link:
\begin{equation}
    \mathcal{R}_{\mathrm{LFs}} = \mathcal{R}_{\mathrm{LU}}+N_{\mathrm{F}} \, \mathbb{E}_{W_{\mathrm{U}},r_\mathrm{LF},r_\mathrm{FU}} \left[ \min\left\{ \mathcal{R}_{\mathrm{LF},i}, \mathcal{R}_{\mathrm{FU},i} \right\} \right],
\end{equation}
where $\mathcal{R}_{\mathrm{LU}}$ is given in (\ref{app:R_LU}). By integrating over the distribution of the four random variables, the average rate of the satellite cluster can be expressed as
\begin{equation}
\begin{split}
\mathcal{R}_{\mathrm{LFs}}=N_{\mathrm{F}}  \int_{0}^{\infty}E(w)f_{W_{\mathrm{U}}}\left(w\right) \mathrm{d}w+\mathcal{R}_{\mathrm{LU}},
\end{split}
\end{equation}
where $E\left(w\right)$ is the average data rate with a given small-scale fading $u$
\begin{equation}
\begin{split}
& E\left(w\right) = \int_{0}^{2\pi} \int_{0}^{ \theta_{\mathrm{cap}}} \int_{0}^{\theta_{\mathrm{max}}} f_{\theta_{\mathrm{LU}}}(\theta) f_{\psi}(\psi) f_{\varphi}(\varphi) \\
& \times \min\left\{ \mathcal{R}_{\mathrm{LF},i}(\psi), \mathcal{R}_{\mathrm{FU},i}(w,\theta,\psi,\varphi) \right\} \mathrm{d}\theta \mathrm{d}\psi \mathrm{d}\varphi.
\end{split}
\end{equation}
By substituting the PDFs of $\psi$ and $\varphi$ into the above expression, the final result can be obtained.

% \begin{equation}
% \begin{split}
%     \bar{R}_{\mathrm{LFs}} 
%     &= N_{\mathrm{F}}\int_{\theta}^{\theta_{\mathrm{cap}}} \int_{0}^{\theta_{\mathrm{max}}} \int_0^{\infty}\int_0^{\infty}
%     R_{\mathrm{F}}\left(W_{\mathrm{FU}},W_{\mathrm{LF}},\theta_{\mathrm{FU}},\theta_{\mathrm{F}}\right) \\
%     &\quad \times f_{W_s}(W_{\mathrm{FU}})f_{W_s}(W_{\mathrm{LF}}) f_{\theta_{\mathrm{FU}}}(\theta_{\mathrm{FU}}) f_{\theta_{\mathrm{F}}}(\theta_{\mathrm{F}}) \\
%     &\quad \times dW_{\mathrm{FU}}dW_{\mathrm{LF}} d\theta_{\mathrm{FU}}d\theta_{\mathrm{F}} ,
% \end{split}
% \end{equation}
% where
% \begin{equation}\label{eq:RF}
% \begin{split}
% &R_{\mathrm{F}}\left(W_{\mathrm{FU}},W_{\mathrm{LF}},\theta_{\mathrm{LU}},\theta_{\mathrm{F}},\phi_{\mathrm{F}}\right)\\
% &=B \log_2\left(1 + C\cdot \min \left\{\frac{W_{\mathrm{LF}}}{r_{\mathrm{LF}}^2},\frac{W_{\mathrm{FU}}}{r_{\mathrm{FU}}^2}\right\}\right)
%     \end{split}
% \end{equation}
% \begin{equation}
% r_{\mathrm{LF}}\left(\theta_{\mathrm{F}}\right) = R_{\mathrm{sat}} \sqrt{2 (1 - \cos \theta_{\mathrm{F}})}.
% \end{equation}

\bibliographystyle{IEEEtran}
\bibliography{reference}

% Generated by IEEEtran.bst, version: 1.14 (2015/08/26)
\begin{thebibliography}{10}
\providecommand{\url}[1]{#1}
\csname url@samestyle\endcsname
\providecommand{\newblock}{\relax}
\providecommand{\bibinfo}[2]{#2}
\providecommand{\BIBentrySTDinterwordspacing}{\spaceskip=0pt\relax}
\providecommand{\BIBentryALTinterwordstretchfactor}{4}
\providecommand{\BIBentryALTinterwordspacing}{\spaceskip=\fontdimen2\font plus
\BIBentryALTinterwordstretchfactor\fontdimen3\font minus \fontdimen4\font\relax}
\providecommand{\BIBforeignlanguage}[2]{{%
\expandafter\ifx\csname l@#1\endcsname\relax
\typeout{** WARNING: IEEEtran.bst: No hyphenation pattern has been}%
\typeout{** loaded for the language `#1'. Using the pattern for}%
\typeout{** the default language instead.}%
\else
\language=\csname l@#1\endcsname
\fi
#2}}
\providecommand{\BIBdecl}{\relax}
\BIBdecl

\bibitem{xiao2022leo}
Z.~Xiao, J.~Yang, T.~Mao, C.~Xu, R.~Zhang, Z.~Han, and X.-G. Xia, ``{LEO satellite access network (LEO-SAN) toward 6G: Challenges and approaches},'' \emph{IEEE Wireless Communications}, vol.~31, no.~2, pp. 89--96, 2022.

\bibitem{wang2022stochastic}
R.~Wang, M.~A. Kishk, and M.-S. Alouini, ``Stochastic geometry-based low latency routing in massive {LEO} satellite networks,'' \emph{IEEE Transactions on Aerospace and Electronic Systems}, vol.~58, no.~5, pp. 3881--3894, 2022.

\bibitem{dwivedi2023performance}
A.~K. Dwivedi, S.~Chaudhari, N.~Varshney, and P.~K. Varshney, ``Performance analysis of {LEO} satellite-based {IoT} networks in the presence of interference,'' \emph{IEEE Internet of Things Journal}, vol.~11, no.~5, pp. 8783--8799, 2023.

\bibitem{campbell2003planning}
M.~E. Campbell, ``Planning algorithm for multiple satellite clusters,'' \emph{Journal of Guidance, Control, and Dynamics}, vol.~26, no.~5, pp. 770--780, 2003.

\bibitem{liu2018survey}
G.-P. Liu and S.~Zhang, ``A survey on formation control of small satellites,'' \emph{Proceedings of the IEEE}, vol. 106, no.~3, pp. 440--457, 2018.

\bibitem{goh2019leader}
S.~T. Goh, K.-S. Low, and E.-K. Poh, ``Leader-followers satellite formation control for low-thrust small satellite application,'' in \emph{International Symposium on Space Technology and Science (ISTS)}, 2019.

\bibitem{ahn2012leader}
H.-S. Ahn, ``Leader--follower type relative position keeping in satellite formation flying via robust exponential stabilization,'' \emph{International Journal of Robust and Nonlinear Control}, vol.~22, no.~18, pp. 2084--2099, 2012.

\bibitem{wang2025modeling}
R.~Wang, M.~A. Kishk, and M.-S. Alouini, ``Modeling and analysis of non-terrestrial networks by spherical stochastic geometry: A survey,'' \emph{IEEE Communications Surveys \& Tutorials}, 2025, early Access.

\bibitem{lee2024analyzing}
M.~Lee, S.~Kim, M.~Kim, D.-H. Jung, and J.~Choi, ``Analyzing downlink coverage in clustered low {E}arth orbit satellite constellations: {A} stochastic geometry approach,'' 2024, available online: https://arxiv.org/abs/2402.16307.

\bibitem{eyer2007formation}
J.~K. Eyer, C.~J. Damaren, R.~E. Zee, and E.~Cannon, ``A formation flying control algorithm for the canx-45 low {E}arth orbit nanosatellite mission,'' \emph{Space Technology}, vol.~27, no.~4, p. 147, 2007.

\bibitem{popov2021development}
A.~M. Popov, I.~Kostin, J.~Fadeeva, and B.~Andrievsky, ``Development and simulation of motion control system for small satellites formation,'' \emph{Electronics}, vol.~10, no.~24, p. 3111, 2021.

\bibitem{yu2016virtual}
Q.-Y. Yu, W.-X. Meng, M.-C. Yang, L.-M. Zheng, and Z.-Z. Zhang, ``Virtual multi-beamforming for distributed satellite clusters in space information networks,'' \emph{IEEE Wireless Communications}, vol.~23, no.~1, pp. 95--101, 2016.

\bibitem{wang2024ultra}
R.~Wang, M.~A. Kishk, and M.-S. Alouini, ``Ultra reliable low latency routing in {LEO} satellite constellations: A stochastic geometry approach,'' \emph{IEEE Journal on Selected Areas in Communications}, vol.~42, no.~5, pp. 1231--1245, 2024.

\bibitem{jung2023satellite}
D.-H. Jung, J.-G. Ryu, and J.~Choi, ``Satellite clustering for non-terrestrial networks: Orbital configuration-dependent outage analysis,'' \emph{IEEE Wireless Communications Letters}, vol.~13, no.~2, pp. 550--554, 2023.

\bibitem{jung2023satellite2}
D.-H. Jung, G.~Im, J.-G. Ryu, S.~Park, H.~Yu, and J.~Choi, ``Satellite clustering for non-terrestrial networks: Concept, architectures, and applications,'' \emph{IEEE Vehicular Technology Magazine}, vol.~18, no.~3, pp. 29--37, 2023.

\bibitem{al2020modeling}
A.~Al-Hourani and I.~Guvenc, ``On modeling satellite-to-ground path-loss in urban environments,'' \emph{IEEE Communications Letters}, vol.~25, no.~3, pp. 696--700, 2020.

\bibitem{chae2023performance}
S.~H. Chae, H.~Lim, H.~Lee, and B.~C. Jung, ``Performance analysis of dense low {E}arth orbit satellite communication networks with stochastic geometry,'' \emph{Journal of Communications and Networks}, vol.~25, no.~2, pp. 208--221, 2023.

\bibitem{wang2025satellite}
R.~Wang, M.~A. Kishk, H.~H. Yang, and M.-S. Alouini, ``Satellite-terrestrial routing or inter-satellite routing? a stochastic geometry perspective,'' \emph{IEEE Transactions on Aerospace and Electronic Systems}, pp. 1--17, 2025, early Access.

\bibitem{chen20243}
Q.~Chen, L.~Yang, Y.~Zhao, Y.~Wang, H.~Zhou, and X.~Chen, ``3-{ISL} topology: Routing properties and performance in {LEO} mega-constellation networks,'' \emph{IEEE Transactions on Aerospace and Electronic Systems}, vol.~61, no.~2, pp. 4961--4972, 2025.

\bibitem{wang2022conditional}
R.~Wang, A.~Talgat, M.~A. Kishk, and M.-S. Alouini, ``Conditional contact angle distribution in {LEO} satellite-relayed transmission,'' \emph{IEEE Communications Letters}, vol.~26, no.~11, pp. 2735--2739, 2022.

\bibitem{talgat2024maximizing}
A.~Talgat, M.~A. Kishk, and M.-S. Alouini, ``Maximizing uplink data transmission of {LEO}-satellite-based wireless-powered {IoT},'' \emph{IEEE Internet of Things Journal}, vol.~11, no.~17, pp. 28\,975--28\,987, 2024.

\bibitem{wang2022ultra}
R.~Wang, M.~A. Kishk, and M.-S. Alouini, ``Ultra-dense {LEO} satellite-based communication systems: A novel modeling technique,'' \emph{IEEE Communications Magazine}, vol.~60, no.~4, pp. 25--31, 2022.

\bibitem{talgat2020stochastic}
A.~Talgat, M.~A. Kishk, and M.-S. Alouini, ``Stochastic geometry-based analysis of {LEO} satellite communication systems,'' \emph{IEEE Communications Letters}, vol.~25, no.~8, pp. 2458--2462, 2020.

\bibitem{huang2021uplink}
Q.~Huang, M.~Lin, W.-P. Zhu, J.~Cheng, and M.-S. Alouini, ``Uplink massive access in mixed {RF/FSO} satellite-aerial-terrestrial networks,'' \emph{IEEE Transactions on Communications}, vol.~69, no.~4, pp. 2413--2426, 2021.

\bibitem{jung2022performance}
D.-H. Jung, J.-G. Ryu, W.-J. Byun, and J.~Choi, ``Performance analysis of satellite communication system under the shadowed-{R}ician fading: A stochastic geometry approach,'' \emph{IEEE Transactions on Communications}, vol.~70, no.~4, pp. 2707--2721, 2022.

\bibitem{jia2021uplink}
H.~Jia, Z.~Ni, C.~Jiang, L.~Kuang, and J.~Lu, ``Uplink interference and performance analysis for megasatellite constellation,'' \emph{IEEE Internet of Things Journal}, vol.~9, no.~6, pp. 4318--4329, 2021.

\bibitem{al2021analytic}
A.~Al-Hourani, ``An analytic approach for modeling the coverage performance of dense satellite networks,'' \emph{IEEE Wireless Communications Letters}, vol.~10, no.~4, pp. 897--901, 2021.

\end{thebibliography}

\end{document}